\documentclass[10pt]{article}
\usepackage{amsmath,amssymb}
\usepackage{fullpage}
\usepackage{natbib}
\usepackage{graphicx}

\def\R{{\mathbb{R}}}
\def\pr{{\rm Pr}}
\def\E{{\mathbb E}}
\def\X{{\mathcal X}}

\newtheorem{thm}{Theorem}
\newtheorem{lemma}[thm]{Lemma}
\newtheorem{cor}[thm]{Corollary}

\newenvironment{proof}{\noindent {\sc Proof:}}{$\Box$ \medskip}

\title{Randomized partition trees for exact nearest neighbor search}
\author{Sanjoy Dasgupta and Kaushik Sinha \\ \texttt{\{dasgupta,ksinha\}@cs.ucsd.edu} \\ UC San Diego}

\begin{document}

\maketitle

\begin{abstract}
The $k$-d tree was one of the first spatial data structures proposed for nearest neighbor search. Its efficacy is diminished in high-dimensional spaces, but several variants, with randomization and overlapping cells, have proved to be successful in practice. We analyze three such schemes. We show that the probability that they fail to find the nearest neighbor, for any data set and any query point, is directly related to a simple potential function that captures the difficulty of the point configuration. We then bound this potential function in two situations of interest: the first, when data come from a doubling measure, and the second, when the data are documents from a topic model.
\end{abstract}

\section{Introduction}

The problem of nearest neighbor search has engendered a vast body of algorithmic work. In the most basic formulation, there is a set $S$ of $n$ points, typically in an Euclidean space $\R^d$, and any subsequent query point must be answered by its nearest neighbor (NN) in $S$. A simple solution is to store $S$ as a list, and to address queries using a linear-time scan of the list. The challenge is to achieve a substantially smaller query time than this.

We will consider a prototypical modern application in which the number of points $n$ and the dimension $d$ are both large. The primary resource constraints are the size of the data structure used to store $S$ and the amount of time taken to answer queries. For practical purposes, the former must be $O(n)$, or maybe a little more, and the latter must be $o(n)$. Secondary constraints include the time to build the data structure and, sometimes, the time to add new points to $S$ or to remove existing points from $S$.

A major finding of the past two decades has been that these resource bounds can be met if it is enough to merely return a {\it $c$-approximate nearest neighbor}, whose distance from the query is at most $c$ times that of the true nearest neighbor. One such method that has been successful in practice is {\it locality sensitive hashing} (LSH), which has space requirement $n^{1+\rho}$ and query time $O(n^\rho)$, for $\rho \approx 1/c^2$~\citep{AI08}. Another such method is the {\it balanced box decomposition tree}, which takes $O(n)$ space and answers queries with an approximation factor $c = 1+\epsilon$ in $O((6/\epsilon)^d \log n)$ time~\citep{AMNSW98}.

In the latter result, an exponential dependence on dimension is evident, and indeed this is a familiar blot on the nearest neighbor landscape. One way to mitigate the curse of dimensionality is to consider situations in which data have low {\it intrinsic dimension} $d_o$, even if they happen to lie in $\R^d$ for $d \gg d_o$ or in a general metric space. A common assumption is that the data are drawn from a {\it doubling measure} of dimension $d_o$ (or equivalently, have {\it expansion rate} $2^{d_o}$); this is defined in Section~\ref{sec:doubling} below. Under this condition, \citet{KR02} have a scheme that gives exact answers to nearest neighbor queries in time $O(2^{3 d_o} \log n)$, using a data structure of size $O(2^{3 d_o} n)$. The more recent {\it cover tree} algorithm~\citep{BKL06}, which has been used quite widely, creates a data structure in space $O(n)$ and answers queries in time $O(2^{d_o} \log n)$. There is also work that combines intrinsic dimension and approximate search. The {\it navigating net}~\citep{KL04}, given data from a metric space of {\it doubling dimension} $d_o$, has size $O(2^{O(d_o)} n)$ and gives a $(1+\epsilon)$-approximate answer to queries in time $O(2^{O(d_o)} \log n + (1/\epsilon)^{O(d_o)})$; the crucial advantage here is that doubling dimension is a more general and robust notion than doubling measure.

Despite these and many other results, there are two significant deficiencies in the nearest neighbor literature that have motivated the present paper. First, existing analyses have succeeded at identifying, for a given data structure, highly specific families of data for which efficient exact NN search is possible---for instance, data from doubling measures---but have failed to provide a more general characterization. Second, there remains a class of nearest neighbor data structures that are popular and successful in practice, but that have not been analyzed thoroughly. These structures combine classical $k$-d tree partitioning with randomization and overlapping cells, and are the subject of this paper.

\subsection{Three randomized tree structures for exact NN search}

The $k$-d tree is a partition of $\R^d$ into hyper-rectangular cells, based on a set of data points~\citep{B75}. The root of the tree is a single cell corresponding to the entire space. A coordinate direction is chosen, and the cell is split at the median of the data along this direction (Figure~\ref{fig:kd-vs-rp}, left). The process is then recursed on the two newly created cells, and continues until all leaf cells contain at most some predetermined number $n_o$ of points. When there are $n$ data points, the depth of the tree is at most about $\log (n/n_o)$.

\begin{figure}
\begin{center}
\includegraphics[width=5in]{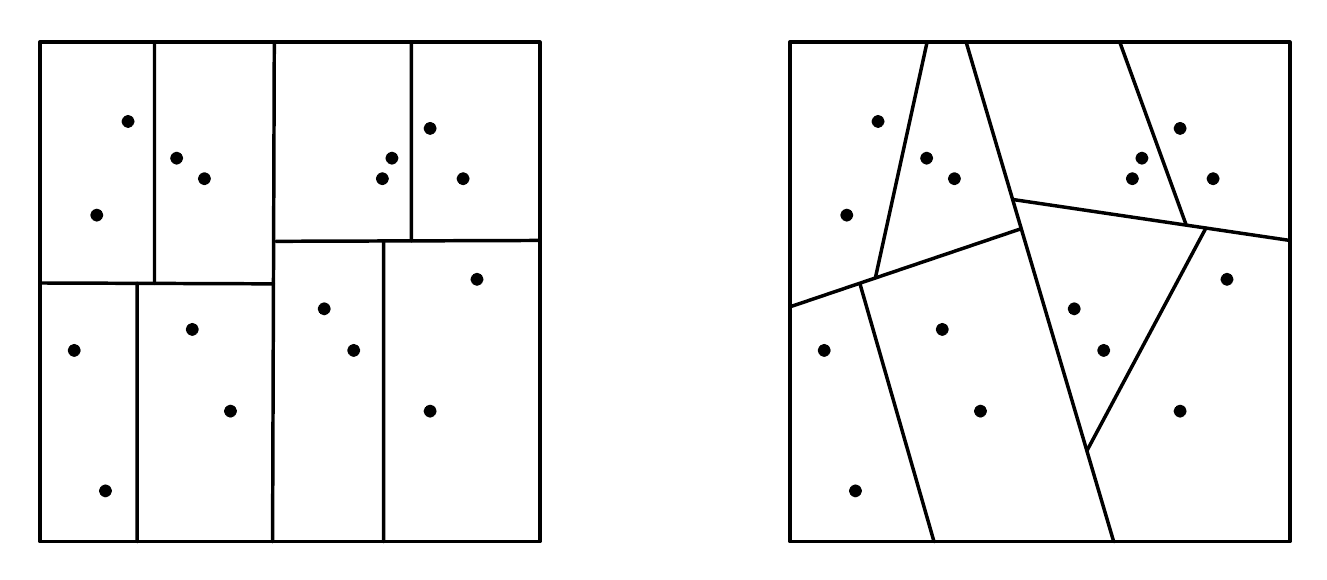}
\end{center}
\caption{Left: A $k$-d tree, with axis-parallel splits. Right: A variant in which the split directions are chosen randomly from the unit sphere.}
\label{fig:kd-vs-rp}
\end{figure}

Given a $k$-d tree built from data points $S$, there are several ways to answer a nearest neighbor query $q$. The quickest and dirtiest of these is to move $q$ down the tree to its appropriate leaf cell, and then return the nearest neighbor in that cell. This {\it defeatist search} takes time just $O(n_o + \log (n/n_o))$, which is $O(\log n)$ for constant $n_o$. The problem is that $q$'s nearest neighbor may well lie in a different cell, for instance when the data happen to be concentrated near cell boundaries. Consequently, the failure probability of this scheme can be unacceptably high.

\begin{figure}
\begin{quote}
\begin{tt}
\begin{tabbing}
Function MakeRPTree($S$) \\
\ \ \= If $|S| \leq n_o$: return leaf containing $S$ \\
    \> Pick $U$ uniformly at random from the unit sphere \\
    \> Pick $\beta$ uniformly at random from $[1/4,3/4]$ \\
    \> Let $v$ be the $\beta$-fractile point on the projection of $S$ onto $U$ \\
    \> Rule($x$) = ({\sc left} if $x \cdot U < v$, otherwise {\sc right}) \\
    \> $\mbox{LeftSubtree} = \mbox{MakeRPTree}(\{x \in S: \mbox{rule}(x) = \mbox{\sc left} \})$ \\
    \> $\mbox{RightSubtree} = \mbox{MakeRPTree}(\{x \in S: \mbox{rule}(x) = \mbox{\sc right} \})$ \\
    \> Return (Rule($\cdot$), LeftSubtree, RightSubtree)
\end{tabbing}
\end{tt}
\end{quote}
\caption{The random projection tree (RP tree)}
\label{alg:rptree}
\end{figure}

Over the years, some simple tricks have emerged, from various sources, for reducing the failure probability. These are nicely laid out by \citet{LMGY04}, who show experimentally that the resulting algorithms are effective in practice.

The first trick is to introduce randomness into the tree. Drawing inspiration from locality-sensitive hashing, \citet{LMGY04} suggest preprocessing the data set $S$ by randomly rotating it, and then applying a $k$-d tree (or related tree structure). This is rather like splitting cells along random directions as opposed to coordinate axes (Figure~\ref{fig:kd-vs-rp}, right). In this paper, we consider a data structure that uses random split directions as well as a second type of randomization: instead of putting the split point exactly at the median, it is placed at a fractile chosen uniformly at random from the range $[1/4, 3/4]$. The resulting structure (Figure~\ref{alg:rptree}) is almost exactly the {\it random projection tree} (or RP tree) of \citet{DF08}. That earlier work showed that in RP trees, the diameters of the cells decrease (down the tree) at a rate depending only on the intrinsic dimension of the data. It is a curious result, but is not helpful in analyzing nearest neighbor search, and in this paper we develop a different line of reasoning. Indeed, there is no point of contact between that earlier analysis and the one we embark upon here.

A second trick suggested by \citet{LMGY04} for reducing failure probability is to allow overlap between cells. This was also proposed in earlier work of \citet{MM01}. Once again, each cell $C$ is split along a direction $U(C)$ chosen at random from the unit sphere. But now, three split points are noted: the median $m(C)$ of the data along direction $U$, the $(1/2)-\alpha$ fractile value $l(C)$, and the $(1/2)+\alpha$ fractile value $r(C)$. Here $\alpha$ is a small constant, like $0.05$ or $0.1$. The idea is to simultaneously entertain a {\it median split}
$$ \mbox{left} = \{x: x \cdot U < m(C)\} \mbox{\ \ \ right} = \{x: x \cdot U \geq m(C)\} $$
and an {\it overlapping split} (with the middle $2 \alpha$ fraction of the data falling on both sides)
$$ \mbox{left} = \{x: x \cdot U < r(C)\} \mbox{\ \ \ right} = \{x: x \cdot U \geq l(C)\}.$$
In the {\it spill tree}~\citep{LMGY04}, each data point in $S$ is stored in multiple leaves, by following the overlapping splits. A query is then answered defeatist-style, by routing it to a single leaf using median splits.

Both the RP tree and the spill tree have query times of $O(n_o + \log (n/n_o))$, but the latter can be expected to have a lower failure probability, and we will see this in the bounds we obtain. On the other hand, the RP tree requires just linear space, while the size of the spill tree is $O(n^{1/(1-\lg (1+2\alpha))})$. When $\alpha = 0.05$, for instance, the size is $O(n^{1.159})$.

In view of these tradeoffs, we consider a further variant, which we call the {\it virtual spill tree}. It stores each data point in a single leaf, following median splits, and hence has linear size. However, each query is routed to multiple leaves, using overlapping splits, and the return value is its nearest neighbor in the union of these leaves.

\begin{figure}
\begin{center}
\includegraphics[width=6in]{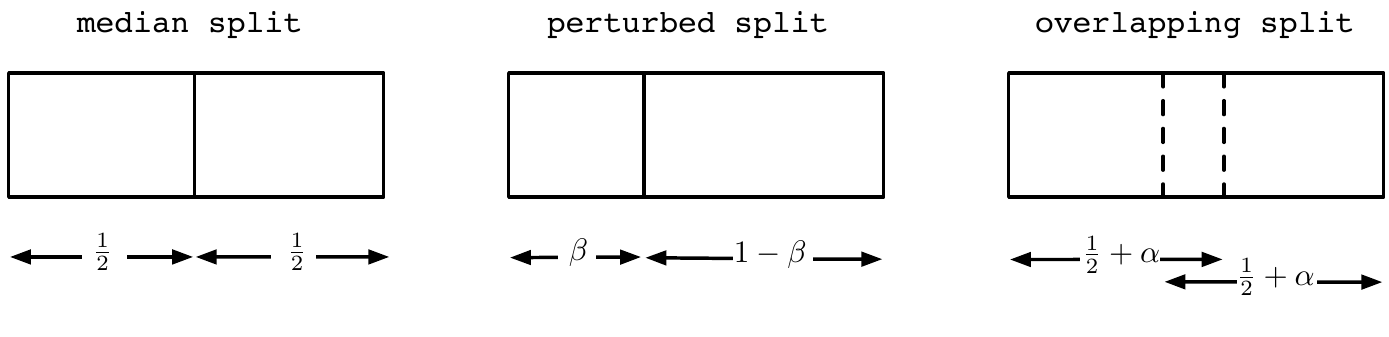}
\end{center}
\caption{Three types of split. The fractions refer to probability mass. $\alpha$ is some constant, while $\beta$ is chosen uniformly at random from $[1/4,3/4]$.}
\label{fig:splits}
\end{figure}

The various splits are summarized in Figure~\ref{fig:splits}, and the three trees use them as follows: 

\begin{center}
\begin{tabular}{c|cc}
                    & Routing data      & Routing queries \\ \hline
RP tree             & Perturbed split   & Perturbed split \\
Spill tree          & Overlapping split & Median split \\
Virtual spill tree  & Median split      & Overlapping split
\end{tabular}
\end{center}

One small technicality: if, for instance, there are duplicates among the data points, it might not be possible to achieve a median split, or a split at a desired fractile. We will ignore these discretization problems.

\subsection{Analysis of failure probability}

Our three schemes for nearest neighbor search---the RP tree and the two spill trees---can be analyzed in a simple and unified framework. Pick any data set $x_1, \ldots, x_n \in \R^d$ and any query $q \in \R^d$. The probability of failure, of not finding the nearest neighbor, can be shown to be directly related to the quantity
$$ \Phi(q, \{x_1, \ldots, x_n\}) \ = \ \frac{1}{n} \sum_{i=2}^n \frac{\|q-x_{(1)}\|}{\|q-x_{(i)}\|} ,$$
where $x_{(1)}, x_{(2)}, \ldots$ denotes an ordering of the $x_i$ by increasing distance from $q$. For RP trees, the failure probability is proportional to $\Phi \log (1/\Phi)$ (Theorem~\ref{thm:rp}); for the two spill trees, it is proportional to $\Phi$ (Theorem~\ref{thm:spill}). The results extend easily to the problem of searching for the $k$ nearest neighbors. Moreover, these bounds are roughly tight: a failure probability proportional to $\Phi$ is inevitable unless there is a significant amount of collinearity within the data (Corollary~\ref{cor:three-points}).

Let's take a closer look at this potential function. If $\Phi$ is close to 1, then all the points are roughly the same distance from $q$, and so we can expect that the NN query is not easy to answer. On the other hand, if $\Phi$ is close to zero, then most of the points are much further away than the nearest neighbor, so the latter should be easy to identify. Thus the potential function is an intuitively reasonable measure of the difficulty of NN search.

This general characterization of data configurations amenable to efficient exact NN search, by the three data structures, is our main result. Earlier work has looked at other data structures, and has only provided guarantees for very specific families of data. To illustrate our theorem, we bound $\Phi$ for two commonly-studied data types. In either scenario, the queries are arbitrary.
\begin{itemize}
\item When $x_1, \ldots, x_n$ are drawn i.i.d.\ from a doubling measure (Section~\ref{sec:doubling}). As we discussed earlier, this is the assumption under which many other results for exact NN search have been obtained.
\item When $x_1, \ldots, x_n$ are documents drawn from a topic model (Section~\ref{sec:topics}).
\end{itemize}
For doubling measures of intrinsic dimension $d_o$, we show that the spill tree is able to answer exact nearest neighbor queries in time $O(d_o)^{d_o}  + O(\log n)$, with a probability of error that is an arbitrarily small constant, while the RP tree is slower by only a logarithmic factor (Theorem~\ref{thm:failure-doubling}). These are close to the best results that have been obtained using other data structures. (The failure probability is over the randomization in the tree structure, and can be further reduced by building multiple trees.) We chose the topic model as an example of a significantly harder case: its data distribution is more concentrated, in the sense that there are a lot of data points that are only slightly further away than the nearest neighbor. The resulting savings are far more modest though non-negligible: for large $n$, the time to answer a query is roughly $n \cdot 2^{-O(\sqrt{L})}$, where $L$ is the expected document length.

In some situations, the time to construct the data structure, and the ability to later add or remove data points, are significant factors. It is readily seen that the construction time for the spill tree is proportional to its size, while that of the RP tree and the virtual spill is $O(n \log n)$. Adding and removing points is also easy: all guarantees hold if these are performed locally, while rebuilding the entire data structure after every $O(n)$ such operations.

\section{A potential function for point configurations}

To motivate the potential function $\Phi$, we start by considering what happens when there are just two data points and one query point.

\subsection{How random projection affects the relative placement of three points}

Consider any three points $q, x, y \in \R^d$, such that $x$ is closer to $q$ than is $y$; that is, $\|q-x\| \leq \|q-y\|$.

Now suppose that a random direction $U$ is chosen from the unit sphere $S^{d-1}$, and that the points are projected onto this direction. What is the probability that $y$ falls between $q$ and $x$ on this line? The following lemma answers this question exactly. An approximate solution, with different proof method, was given earlier by \citet{K97}.

\begin{lemma}
Pick any $q,x,y \in \R^d$ with $\|q - x\| \leq \|q - y\|$. Pick a random unit direction $U$. Then
$$ \pr_U(\mbox{$y \cdot U$ falls (strictly) between $q \cdot U$ and $x \cdot U$}) 
\ = \ 
\frac{1}{\pi} \arcsin \left( \frac{\|q-x\|}{\|q-y\|} \sqrt{1 - \left(\frac{(q-x)\cdot(y-x)}{\|q-x\|\,\|y-x\|}\right)^2}\right).$$
\label{lemma:three-points}
\end{lemma}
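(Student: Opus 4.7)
The plan is to convert the event into a statement about sign agreement of two projected vectors, evaluate the probability via the standard 2-dimensional reduction, and then translate the resulting angle into the claimed form via the law of sines in the triangle $qxy$.

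First I would observe that $y \cdot U$ falls strictly between $q \cdot U$ and $x \cdot U$ if and only if $(y-q)\cdot U$ and $(x-y)\cdot U$ have the same nonzero sign, since the three projections must be monotonic when listed as $q\cdot U,\ y\cdot U,\ x\cdot U$. This reformulation is the main conceptual step; everything after it is routine.

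Next I would invoke the standard random-hyperplane identity: for any nonzero $u,v \in \R^d$ and $U$ uniform on the sphere,
$$ \pr_U[\mathrm{sign}(u\cdot U)\neq \mathrm{sign}(v\cdot U)] \ = \ \frac{\angle(u,v)}{\pi}, $$
which follows by restricting attention to the 2-plane spanned by $u$ and $v$, where the direction of the projection of $U$ is uniform on the circle and the ``disagreement'' wedges have total angular measure $2\angle(u,v)$. Setting $u=y-q$ and $v=x-y$ yields sign-agreement probability $1 - \angle(y-q,x-y)/\pi$. Flipping one vector, $\angle(y-q,x-y) = \pi - \angle(q-y,x-y)$, and the latter is precisely the interior angle of triangle $qxy$ at vertex $y$; call it $\psi$. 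So the desired probability equals $\psi/\pi$.

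Finally I would apply the law of sines in triangle $qxy$,
$$ \frac{\sin \psi}{\|q-x\|} \ = \ \frac{\sin(\angle qxy)}{\|q-y\|}, $$
together with $\sin(\angle qxy) = \sqrt{1-\cos^2(\angle qxy)}$ and $\cos(\angle qxy) = (q-x)\cdot(y-x)\,/\,(\|q-x\|\,\|y-x\|)$. This produces a formula for $\sin \psi$ exactly matching the argument of $\arcsin$ in the statement. The only subtlety is that $\arcsin(\sin \psi) = \psi$ requires $\psi \leq \pi/2$, which I would check by noting that since $\|q-x\| \leq \|q-y\|$, the angle $\psi$ is opposite the shorter of these two sides and hence cannot be the unique obtuse angle of the triangle. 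I expect the initial sign reformulation to be the only non-obvious step; the rest is mechanical triangle geometry.
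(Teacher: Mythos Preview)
Your proof is correct and takes a genuinely different route from the paper's. The paper proceeds by direct computation: it replaces $U$ by a standard Gaussian, places $q$ at the origin and $x$ along $e_1$, splits $U$ and $y$ into first-coordinate and remainder parts, and reduces the event to the ratio of two independent standard normals lying in a specified interval. Since that ratio is standard Cauchy, the probability comes out as a difference of arctangents, which is then massaged via trigonometric identities into the stated $\arcsin$ form.

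Your argument is more geometric and, to my taste, cleaner: the reformulation as sign agreement of $(y-q)\cdot U$ and $(x-y)\cdot U$, combined with the Goemans--Williamson hyperplane identity, immediately identifies the probability as $\psi/\pi$ with $\psi$ the interior angle at $y$. The law of sines then does in one line what the paper obtains from the arctan-to-arcsin manipulation, and your check that $\psi\le\pi/2$ (since $\psi$ is opposite a side no longer than the side opposite the angle at $x$, so $\psi$ cannot exceed that angle and hence cannot be the sole obtuse angle) is exactly what is needed to invert the sine. The paper's approach is fully self-contained and its Cauchy-integral machinery could be reused for related questions; yours exposes the answer's geometric meaning (the angle at $y$) more transparently and avoids any integration. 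One small remark: the collinear case (no genuine triangle) is handled explicitly in the paper and only implicitly in yours, but it causes no trouble---under the hypothesis $\|q-x\|\le\|q-y\|$ the angle $\psi$ collapses to $0$, matching the formula.
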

\begin{proof}
We may assume that $U$ is drawn from $N(0, I_d)$, the $d$-dimensional Gaussian with mean zero and unit covariance. This gives exactly the right distribution if we scale $U$ to unit length, but we can skip this last step since it has no effect on the question we are considering. 

We can also assume, without loss of generality, that $q$ lies at the origin and that $x$ lies along the (positive) $x_1$-axis: that is, $q = 0$ and $x = \|x\| e_1$. It will then be helpful to split the direction $U$ into two pieces, its component $U_1$ in the $x_1$-direction, and the remaining $d-1$ coordinates $U_R$. Likewise, we will write $y = (y_1, y_R)$.

If $y_R = 0$ then $x$, $y$, and $q$ are collinear, and the projection of $y$ cannot possibly fall between those of $x$ and $q$. In what follows, we assume $y_R \neq 0$.

Let $E$ denote the event of interest:
\begin{eqnarray*}
E 
& \equiv & \mbox{$y \cdot U$ falls between $q \cdot U$ (that is, $0$) and $x \cdot U$ (that is, $\|x\| U_1$)} \\
& \equiv & \mbox{$y_R \cdot U_R$ falls between $-y_1 U_1$ and $(\|x\|-y_1) U_1$}
\end{eqnarray*}
The interval of interest is either $(-y_1 |U_1|, (\|x\|-y_1)|U_1|)$, if $U_1 \geq 0$, or $(-(\|x\|-y_1) |U_1|, y_1 |U_1|)$, if $U_1 < 0$. To simplify things, $y_R \cdot U_R$ is independent of $U_1$ and is distributed as $N(0, \|y_R\|^2)$, which is symmetric and thus assigns the same probability mass to the two intervals. We can therefore write
$$ \pr_U (E) \ = \ \pr_{U_1} \pr_{U_R} (-y_1 |U_1| < y_R \cdot U_R < (\|x\| - y_1) |U_1|).$$
Let $Z$ and $Z'$ be independent standard normals $N(0,1)$. Since $U_1$ is distributed as $Z$ and $y_R \cdot U_R$ is distributed as $\|y_R\| Z'$,
$$ \pr_U(E) \ = \ \pr(-y_1 |Z| < \|y_R\| Z' < (\|x\| - y_1) |Z|) \ = \ \pr\left(\frac{Z'}{|Z|} \in \left(-\frac{y_1}{\|y_R\|}, \frac{\|x\|-y_1}{\|y_R\|}\right) \right).$$

Now $Z'/|Z|$ is the ratio of two standard normals, which has a standard Cauchy distribution. Using the formula for a Cauchy density,
\begin{eqnarray*}
\pr(E) 
& = & 
\int_{-y_1/\|y_R\|}^{(\|x\|-y_1)/\|y_R\|} \frac{\mathrm{d}w}{\pi (1+w^2)} \\
& = & 
\frac{1}{\pi} \left( \arctan \left(\frac{\|x\| - y_1}{\|y_R\|}\right) - \arctan \left(\frac{-y_1}{\|y_R\|}\right) \right) \\
& = & 
\frac{1}{\pi} \arctan \frac{\|x\| \, \|y_R\|}{\|y\|^2 - y_1 \|x\|} \\
& = & 
\frac{1}{\pi} \arcsin \left( \frac{\|x\|}{\|y\|} \cdot \sqrt{\frac{\|y\|^2 - y_1^2}{\|y\|^2 + \|x\|^2 - 2y_1 \|x\|}} \right) ,
\end{eqnarray*}
which is exactly the expression in the lemma statement once we invoke $y_1 = (y \cdot x)/\|x\|$ and factor in our assumption that $q=0$.
\end{proof}

To simplify the expression, define an index of the {\it collinearity} of $q,x,y$ to be 
$$ \mbox{\rm coll}(q,x,y) \ = \ \frac{|(q-x) \cdot (y-x)|}{\|q-x\|\,\|y-x\|} .$$
This value, in the range $[0,1]$, is 1 when the points are collinear, and 0 when $q-x$ is orthogonal to $x-y$.
\begin{cor}
Under the conditions of Lemma~\ref{lemma:three-points},
$$ \frac{1}{\pi} \frac{\|q-x\|}{\|q-y\|} 
\sqrt{1 - \mbox{\rm coll}(q,x,y)^2}
\ \leq \ 
\pr_U(\mbox{$y \cdot U$ falls between $q \cdot U$ and $x \cdot U$}) 
\ \leq \ 
\frac{1}{2} \frac{\|q-x\|}{\|q-y\|} .
$$
\label{cor:three-points}
\end{cor}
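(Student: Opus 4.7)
The plan is to derive both bounds directly from the exact formula supplied by Lemma~\ref{lemma:three-points} using only elementary properties of $\arcsin$ on $[0,1]$. Write $s = \frac{\|q-x\|}{\|q-y\|}\sqrt{1-\mathrm{coll}(q,x,y)^2}$, so that $s \in [0,1]$ (using $\|q-x\| \le \|q-y\|$ and $\mathrm{coll} \in [0,1]$) and the lemma says $\pr_U(E) = \frac{1}{\pi}\arcsin(s)$. Both bounds then reduce to sandwiching $\arcsin(s)$ by linear functions of $s$ on $[0,1]$.

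For the lower bound I would use $\arcsin(s) \ge s$ on $[0,1]$, which follows because $\arcsin(0)=0$ and $\frac{d}{dt}\arcsin(t) = 1/\sqrt{1-t^2} \ge 1$. Dividing by $\pi$ gives $\pr_U(E) \ge s/\pi$, which is exactly the stated lower bound.

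For the upper bound I would use $\arcsin(s) \le (\pi/2)s$ on $[0,1]$. This holds because $\arcsin$ is convex on $[0,1]$ (its derivative $1/\sqrt{1-t^2}$ is increasing), so the graph lies below its chord from $(0,0)$ to $(1,\pi/2)$, whose equation is $y = (\pi/2)t$. Dividing by $\pi$ gives $\pr_U(E) \le s/2$, and then dropping the $\sqrt{1-\mathrm{coll}^2} \le 1$ factor from $s$ yields the stated upper bound $\tfrac{1}{2}\frac{\|q-x\|}{\|q-y\|}$.

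There is no real obstacle here; the only thing worth being careful about is verifying that the argument of $\arcsin$ indeed lies in $[0,1]$ so that these linear bounds apply, but this is immediate from the standing assumption $\|q-x\| \le \|q-y\|$ and the fact that $\mathrm{coll}(q,x,y) \in [0,1]$.
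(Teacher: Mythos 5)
Your proposal is correct and is essentially the paper's own argument: the paper invokes $\theta \geq \sin\theta \geq 2\theta/\pi$ for $\theta \in [0,\pi/2]$, which with $\theta = \arcsin(s)$ is exactly your pair of bounds $s \leq \arcsin(s) \leq (\pi/2)s$. Your extra care in checking that the argument of $\arcsin$ lies in $[0,1]$ is a reasonable detail the paper leaves implicit.
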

\begin{proof}
Apply the inequality $\theta \geq \sin \theta \geq 2\theta/\pi$ for all $0 \leq \theta \leq \pi/2$.
\end{proof}

The upper and lower bounds of Corollary~\ref{cor:three-points} are within a constant factor of each other unless the points are approximately collinear.

\subsection{By how much does random projection separate nearest neighbors?}

For a query $q$ and data points $x_1, \ldots, x_n$, let $x_{(1)}, x_{(2)}, \ldots$ denote a re-ordering of the points by increasing distance from $q$. Consider the potential function
$$ \Phi(q, \{x_1, \ldots, x_n\}) \ = \ \frac{1}{n} \sum_{i=2}^n \frac{\|q-x_{(1)}\|}{\|q-x_{(i)}\|} .$$
\begin{thm}
Pick any points $q, x_1, \ldots, x_n \in \R^d$. If these points are projected to a direction $U$ chosen at random from the unit sphere, then
$$ \E_U(\mbox{fraction of the projected $x_i$ that fall between $q$ and $x_{(1)}$})
\ \leq \ \frac{1}{2} \, \Phi(q, \{x_1, \ldots, x_n\}).$$
\label{thm:insertion}
\end{thm}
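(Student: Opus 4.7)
The plan is to reduce the statement to a sum of per-point probabilities and then apply Corollary~\ref{cor:three-points} directly. By linearity of expectation,
\[
\E_U(\text{fraction of projected } x_i \text{ between } q\cdot U \text{ and } x_{(1)}\cdot U)
\ =\ \frac{1}{n}\sum_{i=1}^{n} \pr_U\bigl(x_{(i)}\cdot U \text{ lies strictly between } q\cdot U \text{ and } x_{(1)}\cdot U\bigr).
\]
So the whole theorem becomes a bound on this sum of probabilities.

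First I would dispose of the $i=1$ term: it is identically zero, since $x_{(1)}\cdot U$ cannot lie \emph{strictly} between $q\cdot U$ and itself. For each $i\ge 2$, the triple $(q, x_{(1)}, x_{(i)})$ satisfies the hypothesis of Lemma~\ref{lemma:three-points} and Corollary~\ref{cor:three-points} because $\|q-x_{(1)}\|\le \|q-x_{(i)}\|$ by the definition of the distance-ordering. I would then apply the upper bound in Corollary~\ref{cor:three-points} with $x = x_{(1)}$ and $y = x_{(i)}$, which gives
\[
\pr_U\bigl(x_{(i)}\cdot U \text{ between } q\cdot U \text{ and } x_{(1)}\cdot U\bigr)
\ \le\ \frac{1}{2}\,\frac{\|q-x_{(1)}\|}{\|q-x_{(i)}\|}.
\]

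Summing this bound over $i=2,\dots,n$ and dividing by $n$ yields exactly $\tfrac12\,\Phi(q,\{x_1,\ldots,x_n\})$, completing the argument.

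There isn't really a hard step here; the whole theorem is essentially a bookkeeping consequence of the three-point inequality already proved. The one subtlety worth stating carefully is why only the ``$i\ne 1$'' terms matter (strict betweenness rules out $i=1$), and why the role of $x$ in the corollary is played by $x_{(1)}$ rather than $x_{(i)}$ — the corollary requires the first argument after $q$ to be the \emph{nearer} of the two, which is exactly $x_{(1)}$ in our setting. Once those two observations are in place the proof is a one-line application of Corollary~\ref{cor:three-points}.
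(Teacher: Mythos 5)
Your proposal is correct and follows exactly the paper's argument: define per-point events, bound each by Corollary~\ref{cor:three-points} with $x = x_{(1)}$ and $y = x_{(i)}$, and conclude by linearity of expectation. The extra remarks about the vanishing $i=1$ term and the ordering hypothesis are accurate but not a departure from the paper's proof.
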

\begin{proof}
Let $Z_i$ be the event that $x_{(i)}$ falls between $q$ and $x_{(1)}$ in the projection. By Corollary~\ref{cor:three-points}, 
$$ \pr_U(Z_i) 
\ \leq \ 
\frac{1}{2} \frac{\|q - x_{(1)}\|}{\|q - x_{(i)}\|}
$$
The lemma now follows by linearity of expectation.
\end{proof}

The upper bound of Theorem~\ref{thm:insertion} is fairly tight, as can be seen from Corollary~\ref{cor:three-points}, unless there is a high degree of collinearity between the points.

In the tree data structures we analyze, most cells contain only a subset of the data $\{x_1, \ldots, x_n\}$. For a cell that contains $m$ of these points, the appropriate variant of $\Phi$ is
$$ \Phi_m(q, \{x_1, \ldots, x_n\}) \ = \ \frac{1}{m} \sum_{i=2}^m \frac{\|q-x_{(1)}\|}{\|q-x_{(i)}\|} .$$

\begin{cor}
Pick any points $q, x_1, \ldots, x_n$ and let $S$ denote any subset of the $x_i$ that includes $x_{(1)}$. If $q$ and the points in $S$ are projected to a direction $U$ chosen at random from the unit sphere, then for any $0 < \alpha < 1$,
$$ \pr_U(\mbox{at least an $\alpha$ fraction of $S$ falls between $q$ and $x_{(1)}$ when projected})
\ \leq \ 
\frac{1}{2\alpha} \Phi_{|S|}(q, \{x_1, \ldots, x_n\}).$$
\label{cor:insertion}
\end{cor}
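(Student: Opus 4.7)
The plan is to combine the pointwise bound from Corollary~\ref{cor:three-points} with a monotonicity argument to reduce the case of an arbitrary subset $S$ to the $|S|$ nearest points, and then apply Markov's inequality.

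\textbf{Step 1: Expected count via Corollary~\ref{cor:three-points}.} For each $x \in S$ with $x \neq x_{(1)}$, let $Z_x$ be the indicator of the event that $x \cdot U$ falls between $q \cdot U$ and $x_{(1)} \cdot U$. Since $\|q - x_{(1)}\| \leq \|q - x\|$, Corollary~\ref{cor:three-points} yields $\pr_U(Z_x = 1) \leq \tfrac{1}{2} \|q-x_{(1)}\|/\|q-x\|$. Summing over $x \in S \setminus \{x_{(1)}\}$ and invoking linearity of expectation, the expected number of $S$-points landing between $q$ and $x_{(1)}$ in projection is at most $\tfrac{1}{2} \sum_{x \in S \setminus \{x_{(1)}\}} \|q-x_{(1)}\|/\|q-x\|$.

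\textbf{Step 2: Reduce $S$ to the nearest points.} The sum from Step~1 depends on $S$, but since $1/\|q-x\|$ is a decreasing function of distance to $q$, among all subsets of size $|S|$ containing $x_{(1)}$, this sum is maximized by choosing $S$ to be $\{x_{(1)}, x_{(2)}, \ldots, x_{(|S|)}\}$. Thus for any admissible $S$,
$$ \sum_{x \in S \setminus \{x_{(1)}\}} \frac{\|q-x_{(1)}\|}{\|q-x\|} \ \leq \ \sum_{i=2}^{|S|} \frac{\|q-x_{(1)}\|}{\|q-x_{(i)}\|} \ = \ |S| \cdot \Phi_{|S|}(q, \{x_1, \ldots, x_n\}). $$

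\textbf{Step 3: Markov's inequality.} Let $N$ be the number of $S$-points that fall between $q$ and $x_{(1)}$ under projection. Combining Steps~1 and~2, $\E_U[N] \leq \tfrac{1}{2} |S| \Phi_{|S|}$. The event that at least an $\alpha$ fraction falls in the interval is $\{N \geq \alpha |S|\}$, so Markov gives $\pr_U(N \geq \alpha |S|) \leq \E_U[N]/(\alpha |S|) \leq \Phi_{|S|}/(2\alpha)$, as claimed.

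The proof is essentially bookkeeping on top of Corollary~\ref{cor:three-points}; the only conceptual point that requires care is Step~2, ensuring that even though $\Phi_{|S|}$ is defined using the global nearest neighbors $x_{(2)}, \ldots, x_{(|S|)}$ rather than the elements of $S$, the monotonicity of $1/\|q-x\|$ in distance lets us bound the $S$-specific sum by the one defining $\Phi_{|S|}$. There is no real obstacle beyond keeping the indexing straight.
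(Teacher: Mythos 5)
Your proof is correct and follows the same route as the paper: the paper's own argument applies Theorem~\ref{thm:insertion} (which is exactly your Step~1 packaged as an expectation bound) to $S$, notes the same maximization over subsets containing $x_{(1)}$, and finishes with Markov's inequality. Your version simply unpacks Theorem~\ref{thm:insertion} back into Corollary~\ref{cor:three-points} plus linearity of expectation, which is fine.
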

\begin{proof}
This follows immediately by applying Theorem~\ref{thm:insertion} to $S$, noting that the corresponding value of $\Phi$ is maximized when $S$ consists of the points closest to $q$, and then applying Markov's inequality.
\end{proof}

\subsection{Extension to $k$ nearest neighbors}

If we are interested in finding the $k$ nearest neighbors, a suitable generalization of $\Phi_m$ is
$$ \Phi_{k,m}(q, \{x_1, \ldots, x_n\}) \ = \ 
\frac{1}{m} \sum_{i=k+1}^m \frac{(\|q - x_{(1)}\| + \cdots + \|q - x_{(k)}\|)/k}{\|q - x_{(i)}\|} .
$$
\begin{thm}
Pick any points $q, x_1, \ldots, x_n$ and let $S$ denote any subset of the $x_i$ that includes $x_{(1)}, \ldots, x_{(k)}$. Suppose $q$ and the points in $S$ are projected to a direction $U$ chosen at random from the unit sphere. Then, for any $0 < \alpha < 1$, the probability (over $U$) that in the projection, there is some $1 \leq j \leq k$ for which $\geq \alpha m$ points lie between $x_{(j)}$ and $q$ is at most
$$
\frac{k}{2(\alpha - (k-1)/|S|)} \Phi_{k,|S|}(q, \{x_1, \ldots, x_n\}).
$$
provided $k < \alpha |S| + 1$.
\label{thm:insertion-k}
\end{thm}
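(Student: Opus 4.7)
The plan is to reduce to Corollary~\ref{cor:three-points} exactly as in the $k=1$ case (Theorem~\ref{thm:insertion} and Corollary~\ref{cor:insertion}), except that I now have to track, in parallel, the $k$ separate events indexed by $j\in\{1,\ldots,k\}$. Write $m=|S|$. For every pair $(j,i)$ with $1\leq j\leq k<i\leq m$, let $Z_{j,i}$ denote the event that the projection of $x_{(i)}$ lies strictly between those of $q$ and $x_{(j)}$; since $\|q-x_{(j)}\|\leq\|q-x_{(i)}\|$, Corollary~\ref{cor:three-points} gives $\pr_U(Z_{j,i})\leq\tfrac12\|q-x_{(j)}\|/\|q-x_{(i)}\|$. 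Setting $W_j=\sum_{i>k}\mathbf{1}[Z_{j,i}]$, linearity of expectation plus the definition of $\Phi_{k,m}$ will yield
$$\sum_{j=1}^k \E W_j \ \leq\ \frac{1}{2}\sum_{i=k+1}^{m}\sum_{j=1}^{k}\frac{\|q-x_{(j)}\|}{\|q-x_{(i)}\|}\ =\ \frac{km}{2}\,\Phi_{k,m}(q,\{x_1,\ldots,x_n\}).$$

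The second step, which is the only place the argument genuinely differs from $k=1$, is a small accounting observation: if at least $\alpha m$ points of $S$ lie between $q$ and $x_{(j)}$ in the projection, then at most $k-1$ of them can come from $\{x_{(1)},\ldots,x_{(k)}\}\setminus\{x_{(j)}\}$, so at least $\alpha m-(k-1)$ of them have index $>k$ and hence contribute to $W_j$. Thus the ``bad'' event is contained in $\bigcup_{j=1}^{k}\{W_j\geq\alpha m-(k-1)\}$, and the hypothesis $k<\alpha|S|+1$ is exactly what makes the threshold $\alpha m-(k-1)$ positive, so that Markov's inequality is legitimate.

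Finishing up, a union bound over $j$ followed by Markov's inequality gives
$$\pr_U(\mbox{bad})\ \leq\ \sum_{j=1}^{k}\frac{\E W_j}{\alpha m-(k-1)}\ \leq\ \frac{k\,\Phi_{k,m}(q,\{x_1,\ldots,x_n\})}{2(\alpha-(k-1)/m)},$$
which matches the claim. I expect the main (and only) obstacle to be stating the bookkeeping in the middle paragraph cleanly; the per-pair probability bound and the Markov/union-bound closure are direct transcriptions of the arguments already used to prove Theorem~\ref{thm:insertion} and Corollary~\ref{cor:insertion}.
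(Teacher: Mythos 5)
Your proof is correct and follows essentially the same route as the paper's: the paper defines $N_j$ (your $W_j$) as the count of points of index $>k$ falling between $q$ and $x_{(j)}$, applies Markov with threshold $\alpha m-(k-1)$, and takes a union bound over $j$. The only difference is cosmetic: you spell out the accounting step justifying the threshold $\alpha m-(k-1)$, which the paper leaves implicit, while the paper states explicitly (and you leave implicit) the reduction to the worst case $S=\{x_{(1)},\ldots,x_{(m)}\}$.
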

\begin{proof}
Set $m = |S|$. As in Corollary~\ref{cor:insertion}, the probability of the bad event is maximized when $S = \{x_{(1)}, \ldots, x_{(m)}\}$, so we will assume as much.

For any $1 \leq j \leq k$, let $N_j$ denote the number of points in $\{x_{(k+1)}, \ldots, x_{(m)}\}$ that fall (strictly) between $q$ and $x_{(j)}$ in the projection. Reasoning as in Theorem~\ref{thm:insertion}, we have
$$ \pr_U (N_j \geq \alpha m - (k-1)) \ \leq \ \frac{\E_U N_j}{\alpha m - (k-1)} 
\ \leq \ 
\frac{1}{2(\alpha m - (k-1))} \sum_{i=k+1}^m \frac{\|q - x_{(j)}\|}{\|q - x_{(i)}\|} .$$
Taking a union bound over all $1 \leq j \leq k$,
\begin{eqnarray*}
\pr_U (\exists 1 \leq j \leq k: N_j \geq \alpha m - (k-1)) 
& \leq & 
\frac{1}{2(\alpha m - (k-1))} \sum_{i = k+1}^m \frac{\|q - x_{(1)}\| + \cdots + \|q - x_{(k)}\|}{\|q - x_{(i)}\|} \\
& = & 
\frac{k}{2(\alpha - (k-1)/m)} \Phi_{k,m}(q, \{x_1, \ldots, x_n\}),
\end{eqnarray*}
as claimed.
\end{proof}

\subsection{Bounds on $\Phi$}

The results so far suggest that $\Phi$ is closely related to the failure probabilities of the randomized search trees we have described. In the next section, we will make this relationship precise. We will then give bounds on $\Phi$ for various types of data. Here is a brief preview: for large enough $m$, very roughly,
$$
\Phi_m(q, \{x_1, \ldots, x_n\}) \ \leq \ 
\left\{
\begin{array}{ll}
1/m^{1/d_o} & \mbox{doubling measure of intrinsic dimension $d_o$} \\
1/\sqrt{L} & \mbox{topic model with expected document length $L$} 
\end{array}
\right.
$$

\section{Randomized partition trees}

We'll now see that the failure probability of the random projection tree is proportional to $\Phi \ln (1/\Phi)$, while that of the two spill trees is proportional to $\Phi$. We start with the second result, since it is the more straightforward of the two.

\subsection{Randomized spill trees} 

In a randomized spill tree, each cell is split along a direction chosen uniformly at random from the unit sphere. Two kinds of splits are simultaneously considered: (1) a split at the median (along the random direction), and (2) an overlapping split with one part containing the bottom $1/2 + \alpha$ fraction of the cell's points, and the other part containing the top $1/2 + \alpha$ fraction, where $0 < \alpha < 1/2$ (recall Figure~\ref{fig:splits}).

We consider two data structures that use these splits in different ways. The {\it spill tree} stores each data point in (possibly) multiple leaves, using overlapping splits. The tree is grown until each leaf contains at most $n_o$ points. A query is answered by routing it to a single leaf, using median splits, and returning the NN in that leaf.

The time to answer a query is just $O(n_o + \log (n/n_o))$, but the space requirement of this data structure is super-linear. Its depth is $\ell = \log_{1/\beta} n/n_o$ levels, where $\beta = (1/2) + \alpha$, and thus the total size is 
$$n_o 2^\ell 
\ = \ 
n_o \left( \frac{n}{n_o} \right)^{\log_{1/\beta} 2}.
$$
We will take $n_o$ to be a constant independent of $n$, so this size is $O(n^{\log_{1/\beta} 2})$. When $\alpha = 0.05$, for instance, the size is $O(n^{1.159})$. When $\alpha = 0.1$, it is $O(n^{1.357})$.

A {\it virtual spill tree} stores each data point in a single leaf, using median splits, once again growing the tree until each leaf has $n_o$ or fewer points. Thus the total size is just $O(n)$ and the depth is $\log_2 (n/n_o)$. However, a query is answered by routing it to multiple leaves using overlapping splits, and then returning the NN in the union of these leaves.

\begin{thm}
Suppose a randomized spill tree is built using data points $\{x_1, \ldots, x_n\}$, to depth $\ell = \log_{1/\beta} (n/n_o)$, where $\beta = (1/2) + \alpha$ for regular spill trees and $\beta = 1/2$ for virtual spill trees. If this tree is used to answer a query $q$, then the probability (over randomization in the construction of the tree) that it fails to return $x_{(1)}$ is at most
$$ \frac{1}{2\alpha} \, \sum_{i=0}^\ell \Phi_{\beta^i n}(q, \{x_1, \ldots, x_n\}) . $$
The probability that it fails to return the $k > 1$ nearest neighbors $x_{(1)}, \ldots, x_{(k)}$ is at most
$$ \frac{k}{\alpha} \, \sum_{i=0}^\ell \Phi_{k,\beta^i n}(q, \{x_1, \ldots, x_n\}), $$
provided $k \leq \alpha n_o/2$.
\label{thm:spill}
\end{thm}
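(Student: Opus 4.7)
The plan is to track the query $q$ as it descends the tree and, at each cell, bound the probability that $q$ becomes separated from the nearest neighbor $x_{(1)}$ using the potential function $\Phi_m$. The final bound then comes from a union bound across the $\ell+1$ levels.

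First I would fix a single cell $C$ containing some subset $S$ of $m$ of the data points and identify precisely the geometric event under which $q$ and $x_{(1)}$ fail to end up together at the next level. For the regular spill tree, $q$ is routed by median split while $x_{(1)}$ is stored by overlapping split, so separation occurs iff, along the random direction $U$, $q$ lies on one side of the median $m(C)$ while $x_{(1)}$ lies beyond the $1/2 + \alpha$ fractile on the opposite side. For the virtual spill tree the roles are reversed: $x_{(1)}$ is stored by median split and $q$ is routed by overlapping split, and separation requires $x_{(1)}$ to lie on one side of $m(C)$ while $q$ lies beyond the $1/2+\alpha$ fractile on the other. In either case, the key observation is that the $\alpha m$ points of $S$ whose projections fall in the fractile gap must all lie (strictly) between the projections of $q$ and $x_{(1)}$, so the separation event is contained in the event that at least an $\alpha$ fraction of $S$ falls between $q$ and $x_{(1)}$ in projection.

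Given this reduction, Corollary \ref{cor:insertion} applied to $S$ bounds the per-cell separation probability by $\frac{1}{2\alpha}\Phi_{m}(q, \{x_1,\ldots,x_n\})$. Since each step down the tree scales $m$ by $\beta$ (either $\beta = 1/2 + \alpha$, because each overlapping child of the regular spill tree keeps a $\beta$-fraction of its parent's data, or $\beta = 1/2$, because each median child of the virtual spill tree keeps half), the cell encountered at level $i$ has size $\beta^i n$. A union bound over $i = 0, 1, \ldots, \ell$ then gives the claimed $\frac{1}{2\alpha}\sum_{i=0}^{\ell}\Phi_{\beta^i n}(q, \{x_1,\ldots,x_n\})$.

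For the $k$-nearest-neighbor extension the structure is identical: separation of $q$ from any one of $x_{(1)},\ldots,x_{(k)}$ at a cell of size $m$ forces at least $\alpha m$ points to lie between $q$ and some $x_{(j)}$ with $j \le k$ in projection, and Theorem \ref{thm:insertion-k} bounds this by $\frac{k}{2(\alpha - (k-1)/m)}\Phi_{k,m}$. The hypothesis $k \leq \alpha n_o/2$ ensures $(k-1)/m \leq \alpha/2$ at every cell (since $m \geq n_o$ throughout the tree), so the prefactor simplifies to at most $k/\alpha$, and a union bound over levels yields the bound in the statement. The only real obstacle is executing the geometric reduction cleanly for both spill-tree variants at once; once that is in place, the rest is direct invocation of the earlier corollary/theorem plus a routine union bound.
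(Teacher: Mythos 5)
Your proposal is correct and follows essentially the same route as the paper's proof: reduce the per-level separation event to ``at least an $\alpha$ fraction of the cell's projected points fall between $q$ and $x_{(1)}$,'' apply Corollary~\ref{cor:insertion} (resp.\ Theorem~\ref{thm:insertion-k} with the $k \le \alpha n_o/2$ condition absorbing the $(k-1)/m$ term), swap the roles of $q$ and $x_{(1)}$ for the virtual variant, and union-bound over the $\ell+1$ levels. The only place you are slightly more casual than the paper is the $k$-nearest-neighbor case for the \emph{virtual} spill tree, where the $x_{(j)}$ may sit in different cells and one must track each of their root-to-leaf paths separately---but the paper itself only sketches that point.
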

\begin{proof}
Let's start with the regular spill tree. Consider the internal node at depth $i$ on the root-to-leaf path of query $q$; this node contains $\beta^i n$ data points, for $\beta = (1/2) + \alpha$. What is the probability that $q$ gets separated from $x_{(1)}$ when the node is split? This bad event can only happen if $q$ and $x_{(1)}$ lie on opposite sides of the median and if $x_{(1)}$ is transmitted only to one side of the split, that is, if at least $\alpha$ fraction of the points lie between $x_{(1)}$ and the median. This means that at least an $\alpha$ fraction of the cell's projected points must fall between $q$ and $x_{(1)}$, which occurs with probability at most $(1/2\alpha) \Phi_{\beta^i n}(q, \{x_1, \ldots, x_n\})$ by Corollary~\ref{cor:insertion}. The lemma follows by summing over all levels $i$.

The argument for the virtual spill tree is identical, except that we use $\beta = 1/2$ and we swap the roles of $q$ and $x_{(1)}$; for instance, we consider the root-to-leaf path of $x_{(1)}$.

The generalization to $k$ nearest neighbors is immediate for spill trees. The probability of something going wrong at level $i$ of the tree is, by Theorem~\ref{thm:insertion-k}, at most 
$$ \frac{k}{2(\alpha - (k-1)/n_o)} \Phi_{k, \beta^i n} \ \leq \ \frac{k}{\alpha} \Phi_{k, \beta^i n}.$$
Virtual spill trees require a slightly more careful argument. If the root-to-leaf path of each $x_{(j)}$, for $1 \leq j \leq k$, is considered separately, it can be shown that the total probability of failure at level $i$ is again bounded by the same expression.
\end{proof}

As we mentioned earlier, we will encounter two functional forms of $\Phi_m$: either $1/m^{1/d_o}$, where $d_o$ is a notion of intrinsic dimension, or a small constant $1/\sqrt{L}$. In the former case, the failure probability of the spill tree is roughly $1/(\alpha n_o^{1/d_o})$, and in the latter case it is $(1/(\alpha \sqrt{L})) \log (n/n_o)$. Further details are in Sections~\ref{sec:doubling} and \ref{sec:topics}.

\subsection{Random projection trees}

In an RP tree, a cell is split by choosing a direction uniformly at random from the unit sphere $S^{d-1}$, projecting the points in the cell onto that direction, and then splitting at the $\beta$ fractile, for $\beta$ chosen uniformly at random from $[1/4,3/4]$. As in a $k$-d tree, each point is mapped to a single leaf. Likewise, a query point is routed to a particular leaf, and its nearest neighbor within that leaf is returned. 

In many of the statements below, we will drop the arguments $(q, \{x_1, \ldots, x_n\})$ of $\Phi$ in the interest of readability.
\begin{thm}
Suppose an RP tree is built using points $\{x_1, \ldots, x_n\}$ and is then used to answer a query $q$. The probability (over the randomization in tree construction) that it fails to return the nearest neighbor of $q$ is at most 
$$ \sum_{i=0}^\ell \Phi_{\beta^i n} \ln \frac{2e}{\Phi_{\beta^in}} ,$$
where $\beta = 3/4$ and $\ell = \log_{1/\beta} (n/n_o)$.
The probability that it fails to return the $k$ nearest neighbors of $q$ is at most 
$$ 
\left( 2k \sum_{i=0}^\ell \Phi_{k, \beta^i n} \ln \frac{2e}{k \Phi_{k, \beta^i n}} \right) +  \frac{16(k-1)}{n_o}.
$$
\label{thm:rp}
\end{thm}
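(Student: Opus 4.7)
The plan is to track, level by level along the root-to-leaf path of query $q$, the probability that the $\beta$-fractile split separates $q$ from $x_{(1)}$ (or from any of $x_{(1)},\ldots,x_{(k)}$ in the $k$-NN version), and then union-bound across the $\ell+1$ levels. Because $\beta$ is uniform on $[1/4,3/4]$, each split places at most a $3/4$ fraction of the cell on either side, so the cell at depth $i$ of $q$'s path contains at most $m_i=(3/4)^i n$ points, consistent with the stated $\beta=3/4$ and $\ell=\log_{1/\beta}(n/n_o)$.

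Fix an internal node and condition on its random direction $U$. Let $T$ be the number of cell points whose projection onto $U$ lies strictly between $q\cdot U$ and $x_{(1)}\cdot U$. The $\beta$-fractile split separates $q$ from $x_{(1)}$ exactly when $\beta$ lies in the fractile interval cut out by their two projections, which has length at most $(T+1)/m_i$; since $\beta$ has density $2$ on $[1/4,3/4]$, the conditional separation probability (given $U$) is at most $\min(2T/m_i,1)$, up to a negligible additive $O(1/m_i)$ coming from the $+1$ that can be absorbed into the constant inside the logarithm. Taking expectation over $U$ via a layer-cake identity and Markov's inequality, and invoking Theorem~\ref{thm:insertion} in the form $2\E_U[T]/m_i\leq \Phi_{m_i}$,
$$\E_U\!\left[\min\!\left(\tfrac{2T}{m_i},1\right)\right]=\int_0^1 \pr_U(2T/m_i>t)\,dt\leq \int_0^{\Phi_{m_i}}\!1\,dt+\int_{\Phi_{m_i}}^1 \frac{\Phi_{m_i}}{t}\,dt\leq \Phi_{m_i}\ln\frac{2e}{\Phi_{m_i}}.$$
Summing this over $i=0,\ldots,\ell$ gives the first bound of the theorem.

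For the $k$-NN generalization, a failure at level $i$ requires $\beta$ to fall between $q$ and some $x_{(j)}$ with $j\leq k$; letting $T_j$ count cell points projected strictly between $q$ and $x_{(j)}$, the conditional separation probability is $\leq \min(2\max_{j\leq k}T_j/m_i,1)$. I would now swap Theorem~\ref{thm:insertion} for Theorem~\ref{thm:insertion-k}: setting $\alpha=t/2+(k-1)/m_i$ yields the Markov-type tail $\pr_U(\max_j T_j>tm_i/2+(k-1))\leq k\Phi_{k,m_i}/t$, and substituting into the layer-cake integral (after shifting the variable by $2(k-1)/m_i$) produces, at each level, a bound of the form
$$\frac{C(k-1)}{m_i}+2k\,\Phi_{k,m_i}\ln\frac{2e}{k\,\Phi_{k,m_i}}$$
for an absolute constant $C$. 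The logarithmic terms sum to the first sum in the statement, while $\sum_{i=0}^\ell C(k-1)/m_i$ is a geometric series in $(4/3)^i$ truncated at $(4/3)^\ell=n/n_o$ and is bounded by $16(k-1)/n_o$.

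The principal technical obstacle is the additive $(k-1)/m_i$ offset forced by the hypothesis $k<\alpha m_i+1$ in Theorem~\ref{thm:insertion-k}: it must be propagated through both the Markov tail and the change of variable in the layer-cake integral, and only after summing it as a geometric series does one recover the clean additive correction $16(k-1)/n_o$ stated in the theorem. A secondary check is that the bound remains meaningful at the smallest cells (where $m_i\approx n_o$), which implicitly requires $k=O(n_o)$; this is automatic whenever $k$ is held constant.
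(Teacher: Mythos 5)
Your proposal follows essentially the same route as the paper: bound the per-level separation probability by the fractile mass lying between $q$ and the relevant neighbor(s), integrate the tail bound from Corollary~\ref{cor:insertion} (resp.\ Theorem~\ref{thm:insertion-k}) via the layer-cake formula to get the $\Phi_m \ln(2e/\Phi_m)$ term per level, and sum the additive $(k-1)/m_i$ corrections as a geometric series down to the leaf of size $n_o$ to obtain $16(k-1)/n_o$. The one slip is in the $k$-NN step: the $k$ neighbors may project to both sides of $q$, so the union of fractile intervals has length up to $2\max_j T_j/m_i$ and, with $\beta$ of density $2$, the conditional separation probability is $\min(4\max_j T_j/m_i,1)$ rather than $\min(2\max_j T_j/m_i,1)$; carrying the correct factor $4$ through your computation is exactly what produces the claimed per-level bound $2k\,\Phi_{k,m_i}\ln\frac{2e}{k\Phi_{k,m_i}} + 4(k-1)/m_i$ and hence the stated $16(k-1)/n_o$.
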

\begin{proof}
Consider any internal node of the tree that contains $q$ as well as $m$ of the data points, including $x_{(1)}$. What is the probability that the split at that node separates $q$ from $x_{(1)}$? To analyze this, let $F$ denote the fraction of the $m$ points that fall between $q$ and $x_{(1)}$ along the randomly-chosen split direction. Since the split point is chosen at random from an interval of mass $1/2$, the probability that it separates $q$ from $x_{(1)}$ is at most $F/(1/2)$. Integrating out $F$, we get
\begin{eqnarray*}
\pr(\mbox{$q$ is separated from $x_{(1)}$}) 
& \leq &
\int_0^1 \! \pr(F = f) \frac{f}{1/2} \, \mathrm{d}f \\
& = & 
2 \int_0^1 \! \pr(F > f) \, \mathrm{d}f \\
& \leq & 
2 \int_0^1 \! \min \left( 1, \frac{\Phi_m}{2f} \right) \, \mathrm{d}f \\
& = & 
2 \int_0^{\Phi_m/2} \! \mathrm{d}f + 2 \int_{\Phi_m/2}^1 \! \frac{\Phi_m}{2f} \, \mathrm{d}f 
\ \ = \ \ 
\Phi_m \ln \frac{2e}{\Phi_m},
\end{eqnarray*}
where the second inequality uses Corollary~\ref{cor:insertion}.

The lemma follows by taking a union bound over the path that conveys $q$ from root to leaf, in which the number of data points per level shrinks geometrically, by a factor of $3/4$ or better.

The same reasoning generalizes to $k$ nearest neighbors. This time, $F$ is defined to be the fraction of the $m$ points that lie between $q$ and the furthest of $x_{(1)}, \ldots, x_{(k)}$ along the random splitting direction. Then $q$ is separated from one of these neighbors only if the split point lies in an interval of mass $F$ on either side of $q$, an event that occurs with probability at most $2F/(1/2)$. Using Theorem~\ref{thm:insertion-k},
\begin{eqnarray*}
\lefteqn{
\pr(\mbox{$q$ is separated from some $x_{(j)}$, $1 \leq j \leq k$})} \\
& \leq &
\int_0^1 \! \pr(F = f) \frac{2f}{1/2} \, \mathrm{d}f \\
& = & 
4 \int_0^1 \! \pr(F > f) \, \mathrm{d}f \\
& \leq & 
4 \int_0^1 \! \min \left( 1, \frac{k \Phi_{k,m}}{2(f - (k-1)/m)} \right) \, \mathrm{d}f \\
& \leq & 
4 \int_0^{(k\Phi_{k,m}/2) + (k-1)/m} \! \mathrm{d}f + 4 \int_{(k\Phi_{k,m}/2) + (k-1)/m}^1 \! \frac{k \Phi_{k,m}}{2(f - (k-1)/m)} \, \mathrm{d}f \\
& \leq &
2 k \Phi_{k,m} \ln \frac{2e}{k \Phi_{k,m}} + \frac{4(k-1)}{m},
\end{eqnarray*}
and as before, we sum this over a root-to-leaf path in the tree.
\end{proof}

\subsection{Is randomization necessary?}

The tree data structures we have studied make crucial use of random projection for splitting cells. It would not suffice to use coordinate directions, as in $k$-d trees.

To see this, consider a simple example. Let $q$, the query point, be the origin, and suppose the data points $x_1, \ldots, x_n \in \R^d$ are chosen as follows:
\begin{itemize}
\item $x_1$ is the all-ones vector.
\item Each $x_i, i > 1$, is chosen by picking a coordinate at random, setting its value to $M$, and then setting all remaining coordinates to uniform-random numbers in the range $(0,1)$. Here $M$ is some very large constant.
\end{itemize}
For large enough $M$, the nearest neighbor of $q$ is $x_1$. By letting $M$ grow further, we can let $\Phi(q, \{x_1, \ldots, x_n\})$ get arbitrarily close to zero, which means that our random projection methods will work admirably. However, any coordinate projection will create a disastrously large separation between $q$ and $x_1$: on average, a $(1-1/d)$ fraction of the data points will fall between them.

\section{Bounding $\Phi$}

The exact nearest neighbor schemes we analyze have error probabilities related to $\Phi$, which lies in the range $[0,1]$. The worst case is when all points are equidistant, in which case $\Phi$ is exactly 1, but this is a pathological situation. Is it possible to bound $\Phi$ under simple assumptions on the data?

In this section we study two such assumptions. In each case, query points are arbitrary, but the data are assumed to have been drawn i.i.d.\ from an underlying distribution.

\subsection{Data drawn from a doubling measure}
\label{sec:doubling}

Suppose the data points are drawn from a distribution $\mu$ on $\R^d$ which is a {\it doubling measure}: that is, there exist a constant $C > 0$ and a subset $\X \subseteq \R^d$ such that
$$ \mu(B(x,2r)) \ \leq \ C \cdot \mu(B(x,r)) \ \ \ \ \mbox{for all $x \in \X$ and all $r > 0$}. $$
Here $B(x,r)$ is the closed Euclidean ball of radius $r$ centered at $x$. To understand this condition, it is helpful to also look at an alternative formulation that is essentially equivalent: there exist a constant $d_o > 0$ and a subset $\X \subset \R^d$ such that for all $x \in \X$, all $r > 0$, and all $\alpha \geq 1$,
$$ \mu(B(x,\alpha r)) \ \leq \ \alpha^{d_o} \cdot \mu(B(x,r)) .$$
In other words, the probability mass of a ball grows polynomially in the radius. Comparing this to the standard formula for the volume of a ball, we see that the degree of this polynomial, $d_o$ (which is $\log_2 C$), can reasonably be thought of as the ``dimension'' of the measure $\mu$.

\begin{thm}
Suppose $\mu$ is continuous on $\R^d$ and is a doubling measure with dimension $d_o \geq 2$. Pick any $q \in \X$ and draw $x_1, \ldots, x_n$ independently at random from $\mu$. Pick any $0 < \delta < 1/2$. Then with probability at least $1-3\delta$ over the choice of the $x_i$, for all $2 \leq m \leq n$,
$$ \Phi_m(q, \{x_1, \ldots, x_n\})
\ \leq \ 
6 \left( \frac{2}{m} \ln \frac{1}{\delta} \right)^{1/d_o} .
$$
\label{thm:phi-doubling}
\end{thm}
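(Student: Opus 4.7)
My plan is to reduce the geometric distances $\|q - x_{(i)}\|$ to measure-mass quantiles, translate the doubling hypothesis into an inverse-type bound, and then sum over $i$. Let $p(r) = \mu(B(q,r))$ and define $\rho(t) = \inf\{r \geq 0 : p(r) \geq t\}$. Because $\mu$ is continuous, $p$ is continuous on $[0,\infty)$ with $p(\rho(t)) = t$, and the doubling inequality $p(\alpha r) \leq \alpha^{d_o} p(r)$ rearranges (set $r = \rho(t_1)$ and $\alpha = \rho(t_2)/\rho(t_1)$) into the inverse bound
\begin{equation*}
\frac{\rho(t_2)}{\rho(t_1)} \;\geq\; \left(\frac{t_2}{t_1}\right)^{1/d_o} \qquad \text{for } 0 < t_1 \leq t_2,
\end{equation*}
which is the only form of the hypothesis I actually use.

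Next I establish two simultaneous concentration statements about the random draw of the $x_i$. First, the nearest-neighbor distance is small: $\|q - x_{(1)}\| \leq \rho(\ln(1/\delta)/n)$ with probability at least $1-\delta$, because $\pr(\|q-x_{(1)}\| > \rho(t)) = (1-t)^n \leq e^{-nt}$. Second, for every $i$, the event $\{\|q - x_{(i)}\| < \rho(i/(2n))\}$ coincides with $\{\mathrm{Bin}(n, i/(2n)) \geq i\}$, which a multiplicative Chernoff bound pushes below $e^{-i/6}$. Taking $i_0 := \lceil 6\ln(1/\delta) + c\rceil$ for a modest absolute constant $c$, the sum $\sum_{i \geq i_0} e^{-i/6}$ is a geometric series of total mass $\leq \delta$, so a plain union bound (no dyadic stratification is needed, thanks to the exponential-in-$i$ decay) yields $\|q-x_{(i)}\| \geq \rho(i/(2n))$ uniformly for all $i \geq i_0$. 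Indices $2 \leq i < i_0$ are handled trivially by $\|q-x_{(1)}\|/\|q-x_{(i)}\| \leq 1$.

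On the intersection of these two events (total failure probability $\leq 2\delta$, with the third $\delta$ in the theorem statement absorbing constant slack), the inverse doubling bound yields, for every $i \geq i_0$,
\begin{equation*}
\frac{\|q-x_{(1)}\|}{\|q-x_{(i)}\|} \;\leq\; \frac{\rho(\ln(1/\delta)/n)}{\rho(i/(2n))} \;\leq\; \left(\frac{2\ln(1/\delta)}{i}\right)^{1/d_o}.
\end{equation*}
Summing from $i=2$ to $m$, peeling off the $i < i_0$ block (whose contribution to $\Phi_m$ is at most $i_0/m$), and using $\sum_{i=i_0}^m i^{-1/d_o} \leq \tfrac{d_o}{d_o-1}\, m^{1-1/d_o} \leq 2\, m^{1-1/d_o}$ for $d_o \geq 2$, I get
\begin{equation*}
\Phi_m(q,\{x_1,\ldots,x_n\}) \;\leq\; 2\left(\frac{2\ln(1/\delta)}{m}\right)^{1/d_o} + \frac{O(\ln(1/\delta))}{m}.
\end{equation*}
Since $1/d_o \leq 1/2$, the second term is dominated by the first whenever $m \geq \ln(1/\delta)$; in the opposite regime the claimed bound already exceeds $1$ and is vacuous. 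Light re-tuning of the Chernoff and union-bound constants then produces the explicit factor $6$ in the theorem, uniformly over $2 \leq m \leq n$ since the per-$i$ inequality holds simultaneously for all such $i$.

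The principal obstacle will be getting the uniform-in-$i$ lower bound of the second concentration step at only $O(\delta)$ union-bound cost, so as not to lose a factor of $\log n$ or $\log\log n$ in the final bound. The saving is that the Chernoff tail for the $i$-th order statistic already decays like $e^{-\Omega(i)}$, so the union bound converges as a geometric series and the entire failure budget is absorbed into the modest threshold $i_0 = O(\ln(1/\delta))$ rather than into an extra logarithmic factor outside the main expression.
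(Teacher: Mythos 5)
Your argument is correct and takes a genuinely different route from the paper's. The paper works with a fixed dyadic family of balls $B_j = B(q, 2^j r_o)$, applies a Chernoff bound to the \emph{count} of points in each ball, invokes a rearrangement step (``moving points closer to $q$ only increases $\Phi$'') to reduce to a worst-case packing, and then evaluates $\Phi_m$ by an Abel-type summation over annuli, each contributing $2^{-(j-1)}$ per point. You instead invert the radial distribution: the quantile function $\rho(t)$ turns the doubling hypothesis into the single inequality $\rho(t_2)/\rho(t_1) \geq (t_2/t_1)^{1/d_o}$, and you lower-bound each \emph{order statistic} $\|q-x_{(i)}\|$ directly by $\rho(i/(2n))$, with the union bound over $i$ converging geometrically exactly as the paper's union bound over ball indices does --- so neither route pays a $\log n$ factor. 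Your version is more modular (the probabilistic input is a clean per-$i$ statement about order statistics, and no rearrangement argument is needed) and it isolates precisely where continuity and the doubling condition enter. What the paper's dyadic grouping buys is painless constant bookkeeping: the factor $6$ falls out of a geometric series, whereas your boundary block $i_0/m$ with $i_0 \approx 6\ln(1/\delta) + O(1)$ costs, after absorbing it into $u^{1/d_o}$ with $u = 2\ln(1/\delta)/m$, a total constant closer to $12$ than to $6$ as literally written. The re-tuning you flag does work --- e.g.\ replacing the threshold $\rho(i/(2n))$ by $\rho(i/(4n))$ sharpens the Chernoff exponent enough to shrink $i_0$ to about $1.6\ln(1/\delta)$ while only multiplying the main term by $2^{1/d_o} \leq \sqrt{2}$, bringing the total under $6$ --- but this tradeoff deserves to be carried out explicitly rather than asserted. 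The only other points needing a sentence each are that the open ball of radius $\rho(t)$ has mass at most $t$ (so the binomial comparison is a stochastic domination, not an identity of events), and that for $i_0 \leq i < \lambda\ln(1/\delta)$ the claimed per-$i$ bound exceeds $1$ and hence holds vacuously.
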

\begin{proof}
We will consider a collection of balls $B_o, B_1, B_2, \ldots$ centered at $q$, with geometrically increasing radii $r_o, r_1, r_2, \ldots$, respectively. For $i \geq 1$, we will take $r_i = 2^i r_o$. Thus by the doubling condition, $\mu(B_i) \leq C^i \mu(B_o)$, where $C = 2^{d_o} \geq 4$.

Define $r_o$ to be the radius for which $\mu(B(q,r_o)) = (1/n) \ln (1/\delta)$. This choice implies that $x_{(1)}$ is likely to fall in $B_o$: when points $X = \{x_1, \ldots, x_n\}$ are drawn randomly from $\mu$,
$$ \pr(\mbox{no point falls in $B_o$}) 
\ = \ 
(1 - \mu(B_o))^n
\ \leq \ 
\delta
.
$$
Next, for $i \geq 1$, the expected number of points falling in ball $B_i$ is at most $n C^i \mu(B_o) = C^i \ln (1/\delta)$, and by a multiplicative Chernoff bound,
$$ \pr(|X \cap B_i| \geq 2n C^i \mu(B_o))
\ \leq \ 
\exp(- (n C^i \mu(B_o)/3))
\ = \ 
\delta^{C^i/3}
\ \leq \ 
\delta^{iC/3}
.$$
Summing over all $i$, we get
$$
\pr(\exists i \geq 1: |X \cap B_i| \geq 2n C^i \mu(B_o))
\ \leq \ 
2 \delta^{C/3} 
\ \leq \ 
2 \delta
.$$
We will henceforth assume that $x_{(1)}$ lies in $B_o$ and that each $B_i$ has at most $2n\mu(B_o) C^i = 2C^i \ln (1/\delta)$ points.

Pick any $2 \leq m \leq n$, and recall the expression for $\Phi$:
$$ \Phi_m(q, \{x_1, \ldots, x_n\}) \ = \ \frac{1}{m} \sum_{i=2}^m \frac{\|q - x_{(1)}\|}{\|q - x_{(i)}\|} .$$
Once $x_{(1)}$ is fixed, moving other points closer to $q$ can only increase $\Phi$. Therefore, the maximizing configuration has $2n\mu(B_o) C$ points in $B_1$, followed by $2n \mu(B_o) C^2$ points in $B_2$, and then $2n \mu(B_o) C^3$ points in $B_3$, and so on. Each point in $B_j \setminus B_{j-1}$ contributes at most $1/2^{j-1}$ to the $\Phi$ summation.

Under the worst-case configuration, points $x_{(1)}, \ldots, x_{(m)}$ lie within $B_\ell$, for $\ell$ such that
\begin{equation}
2n\mu(B_o)C^{\ell-1} \ < \ m \ \leq \ 2n\mu(B_o) C^\ell.   \tag{*}
\end{equation}
We then have
\begin{eqnarray*}
\Phi_m
& \leq & 
\frac{1}{m} \left(|X \cap B_1| + \left(\sum_{j=2}^{\ell-1} |X \cap (B_j\setminus B_{j-1})| \cdot \frac{1}{2^{j-1}} \right) + (m - |X \cap B_{\ell-1}|) \cdot \frac{1}{2^{\ell-1}}\right) \\
& = & 
\frac{1}{m} \left(|X \cap B_1| + \sum_{j=2}^{\ell-1} \left(\frac{|X \cap B_j|}{2^{j-1}} - \frac{|X \cap B_{j-1}|}{2^{j-1}} \right) + (m - |X \cap B_{\ell-1}|) \cdot \frac{1}{2^{\ell-1}}\right) \\
& = & 
\frac{1}{m} \left(\frac{m}{2^{\ell-1}} + \sum_{j=1}^{\ell-1} \frac{|X \cap B_j|}{2^j} \right) \\
& = & 
\frac{1}{m} \left( \frac{m}{2^{\ell-1}} + 2n\mu(B_o) \sum_{j=1}^{\ell-1} \left(\frac{C}{2}\right)^j \right) \\
& \leq & 
\frac{1}{m} \left( \frac{m}{2^{\ell-1}} + 4n\mu(B_o) \left(\frac{C}{2}\right)^{\ell-1} \right) \\
& \leq & 
\frac{1}{m} \left( \frac{m}{2^{\ell-1}} + \frac{2m}{2^{\ell-1}} \right) 
\ \ = \ \ 
\frac{6}{2^\ell},
\end{eqnarray*}
where the last inequality comes from (*). To lower-bound $2^\ell$, we again use (*) to get $C^\ell \geq m/(2 n \mu(B_o))$, whereupon 
$$ 2^\ell 
\ \geq \ 
\left( \frac{m}{2 n \mu(B_o)} \right)^{1/\log_2 C}
\ = \ 
\left( \frac{m}{2 \ln (1/\delta)} \right)^{1/\log_2 C}
$$
and we're done.
\end{proof}

This extends easily to the potential function for $k$ nearest neighbors.
\begin{thm}
Under the same conditions as Theorem~\ref{thm:phi-doubling}, for any $k \geq 1$, we have
$$ \Phi_{k,m}(q, \{x_1, \ldots, x_n\})
\ \leq \ 
6 \left( \frac{8}{m} \max\left(k , \ln \frac{1}{\delta} \right)\right)^{1/d_o} .
$$
\label{thm:phi-doubling-k}
\end{thm}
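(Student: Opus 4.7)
The plan is to mirror the proof of Theorem~\ref{thm:phi-doubling}, with a single calibration change: the base ball $B_o$ must be large enough that the top $k$ nearest neighbors lie inside it with high probability, rather than only $x_{(1)}$. Concretely, I would place concentric balls $B_0, B_1, \ldots$ at $q$ with radii $r_i = 2^i r_o$ as before, but choose $r_o$ so that $n\mu(B_o) = 4\max(k, \ln(1/\delta))$ in place of $\ln(1/\delta)$.

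With this choice, a multiplicative Chernoff lower-tail bound yields $|X \cap B_o| \geq k$ with probability at least $1 - \delta$, which forces $x_{(1)}, \ldots, x_{(k)}$ to all lie inside $B_o$, so the numerator $(\|q - x_{(1)}\| + \cdots + \|q - x_{(k)}\|)/k$ appearing in every term of $\Phi_{k,m}$ is bounded by $r_o$. Meanwhile, the upper-tail Chernoff argument from Theorem~\ref{thm:phi-doubling} still delivers $|X \cap B_i| \leq 2nC^i\mu(B_o)$ simultaneously for all $i \geq 1$ with probability at least $1 - 2\delta$, since $n\mu(B_o) \geq 4\ln(1/\delta)$ is more than sufficient to drive the geometric decay when summed over $i$.

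Conditioning on both high-probability events (total failure $\leq 3\delta$), I would bound $\Phi_{k,m}$ by its worst-case configuration, which packs $x_{(k+1)}, \ldots, x_{(m)}$ into the balls from the inside out. A point in $B_j \setminus B_{j-1}$ contributes at most $r_o/r_{j-1} = 1/2^{j-1}$ to the sum, so the resulting telescoping estimate is structurally identical to the one in Theorem~\ref{thm:phi-doubling} (the subtraction of the top $k$ only saves a lower-order term that one can ignore) and collapses to $\Phi_{k,m} \leq 6/2^\ell$, where $\ell$ is the analog index satisfying $2n\mu(B_o) C^{\ell-1} < m \leq 2n\mu(B_o) C^\ell$. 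Substituting the doubling bound $2^\ell \geq (m/(2n\mu(B_o)))^{1/d_o}$ together with $n\mu(B_o) = 4\max(k, \ln(1/\delta))$ then yields exactly $6(8 \max(k, \ln(1/\delta))/m)^{1/d_o}$.

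The only non-routine step is the constant calibration: $n\mu(B_o)$ must be large enough for both Chernoff tails to fire (the lower tail ensures at least $k$ points fall in $B_o$, and the upper tail controls how many fall in each $B_i$), yet small enough to leave the constant $8$ inside the parentheses of the claimed bound. Picking the factor $4\max(k, \ln(1/\delta))$ makes the two tails come out at $\leq \delta$ and $\leq 2\delta$ respectively, reproducing the $3\delta$ failure budget of the statement; that is really the only place the argument departs from the $k = 1$ proof.
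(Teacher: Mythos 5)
Your proposal is correct and follows essentially the same route as the paper's own proof: the paper likewise redefines $r_o$ so that $n\mu(B_o) = 4\max(k,\ln(1/\delta))$, uses a lower-tail Chernoff bound to place $x_{(1)},\ldots,x_{(k)}$ in $B_o$ with probability $\geq 1-\delta$ and the upper-tail bound for the $B_i$ counts with probability $\geq 1-2\delta$, and then reuses the telescoping estimate and relation (*) verbatim. Your calibration of the constant $8$ and the $3\delta$ failure budget matches the paper exactly.
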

\begin{proof}
The only big change is in the definition of $r_o$; it is now the radius for which 
$$ \mu(B_o) \ = \ \frac{4}{n} \max\left(k , \ln \frac{1}{\delta} \right) .$$
Thus, when $x_1, \ldots, x_n$ are drawn independently at random from $\mu$, the
expected number of them that fall in $B_o$ is at least $4k$, and by a multiplicative
Chernoff bound is at least $k$ with probability $\geq 1-\delta$. 

The balls $B_1, B_2, \ldots$ are defined as before, and once again, we can conclude that with
probability $\geq 1-2\delta$, each $B_i$ contains at most $2n C^i \mu(B_o)$ of the data points.

Any point $x_{(i)} \not\in B_o$ lies in some annulus $B_j\setminus B_{j-1}$, and its contribution 
to the summation in $\Phi_{k,m}$ is
$$ \frac{(\|q - x_{(1)}\| + \cdots + \|q - x_{(k)}\|)/k}{\|q - x_{(i)}\|} \ \leq \ \frac{1}{2^{j-1}}.$$
The relationship (*) and the remainder of the argument are exactly as before. 
\end{proof}

We can now give bounds on the failure probabilities of the three tree data structures.
\begin{thm}
There is an absolute constant $c_o$ for which the following holds.
Suppose $\mu$ is a doubling measure on $\R^d$ of intrinsic dimension $d_o \geq 2$.
Pick any query $q \in \X$ and draw $x_1, \ldots, x_n$ independently from $\mu$. Then
with probability at least $1-3\delta$ over the choice of data:
\begin{enumerate}
\item[(a)] For either variant of the spill tree, if $k \leq \alpha n_o/2$,
$$
\pr(\mbox{spill tree fails to return $k$ nearest neighbors}) 
\ \  \leq \ \  
\frac{c_o d_o k}{\alpha} \left( \frac{8 \max(k, \ln 1/\delta)}{n_o} \right)^{1/d_o} .
$$
\item[(b)] For the RP tree with $n_o \geq c_o (3k)^{d_o} \max(k, \ln 1/\delta)$, 
$$
\pr(\mbox{RP tree fails to return $k$ nearest neighbors})
\ \ \leq \ \  
c_o k (d_o + \ln n_o) \left(\frac{8 \max(k, \ln 1/\delta)}{n_o} \right)^{1/d_o}.
$$
\end{enumerate}
These probabilities are over the randomness in tree construction.
\label{thm:failure-doubling}
\end{thm}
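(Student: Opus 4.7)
The plan is to condition on the high-probability event from Theorem~\ref{thm:phi-doubling-k} and then substitute the resulting uniform bound on $\Phi_{k,m}$ into the per-tree failure bounds from Theorem~\ref{thm:spill} and Theorem~\ref{thm:rp}. By Theorem~\ref{thm:phi-doubling-k}, with probability at least $1-3\delta$ over the draw of $x_1, \ldots, x_n$, the inequality $\Phi_{k,m}(q, \{x_1,\ldots,x_n\}) \leq \phi(m) := 6(8 \max(k,\ln(1/\delta))/m)^{1/d_o}$ holds for all $2 \leq m \leq n$ simultaneously. Writing $\phi_i := \phi(\beta^i n) = \phi_\ell \beta^{(\ell-i)/d_o}$ for the bounds at the $\ell+1$ nodes of the root-to-leaf path (where $\phi_\ell = 6(8\max(k,\ln(1/\delta))/n_o)^{1/d_o}$, since $\beta^\ell n = n_o$), everything reduces to summing either $\phi_i$ or $\phi_i \ln(2e/(k\phi_i))$ over $0 \leq i \leq \ell$.

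For part (a), Theorem~\ref{thm:spill} gives a failure bound proportional to $\sum_{i=0}^\ell \phi_i$. Since the $\phi_i$ form a geometric progression with ratio $\beta^{-1/d_o}>1$, the sum is dominated by its last term: $\sum_{i=0}^\ell \phi_i \leq \phi_\ell/(1-\beta^{1/d_o})$. The elementary estimate $1 - \beta^{1/d_o} \geq \ln(1/\beta)/(2d_o)$ (valid because $d_o \geq 2 > \ln 2 \geq \ln(1/\beta)$ for every relevant $\beta$) turns this into $O(d_o\phi_\ell/\ln(1/\beta))$. Since $\ln(1/\beta) = \Omega(1)$ for $\beta \in \{1/2,\, 1/2+\alpha\}$ with $\alpha \leq 1/4$, this yields the stated $(c_o d_o k/\alpha)\phi_\ell$ bound.

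For part (b), Theorem~\ref{thm:rp} bounds the failure probability by $2k \sum_{i=0}^\ell \Phi_{k,\beta^i n}\ln(2e/(k\Phi_{k,\beta^i n})) + 16(k-1)/n_o$. Since the map $\Phi \mapsto \Phi \ln(2e/(k\Phi))$ is increasing on the region $\{k\Phi \leq 2\}$, substituting $\phi_i$ for $\Phi_{k,\beta^i n}$ is legal provided $k\phi_\ell \leq 2$; the precondition $n_o \geq c_o(3k)^{d_o}\max(k,\ln(1/\delta))$ is calibrated precisely to guarantee this (for $c_o$ chosen large enough). I would then split the logarithm using $\phi_i = \phi_\ell \beta^{(\ell-i)/d_o}$, writing $\ln(2e/(k\phi_i)) = \ln(2e/(k\phi_\ell)) + (\ell-i)\ln(1/\beta)/d_o$, which reduces the sum to two geometric-type pieces: $\sum_i \phi_i = O(d_o \phi_\ell)$ and $\sum_i (\ell-i)\phi_i = O(d_o^2 \phi_\ell)$. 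Combining these with the estimate $\ln(2e/(k\phi_\ell)) \leq O(1) + (1/d_o)\ln n_o$, the whole sum is $O(\phi_\ell(d_o + \ln n_o))$, and the residual $16(k-1)/n_o$ is absorbed because $k/n_o \ll k\phi_\ell$ under the same precondition on $n_o$.

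The main obstacle is in part (b): naively bounding $\Phi_{k,\beta^i n} \ln(2e/(k\Phi_{k,\beta^i n}))$ by its worst-case value times $\ell+1$ would introduce a fatal $\ell = \log_{1/\beta}(n/n_o)$ factor. The remedy is to exploit the geometric decay of $\phi_i$ as $i$ decreases, so that the slow-growing logarithm attached to each term never dominates the sum; the payoff is a clean $d_o + \ln n_o$ overhead rather than a $\log n$ one. Making this rigorous requires careful bookkeeping when comparing $\beta^{-1/d_o} - 1$ with $\ln(1/\beta)/d_o$, and verifying that the precondition on $n_o$ really does deliver the monotonicity of $x \ln(2e/x)$ needed to pass from $\Phi_{k,\beta^i n}$ to its upper bound $\phi_i$.
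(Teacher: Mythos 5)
Your proposal is correct and follows essentially the same route as the paper: condition on the event of Theorem~\ref{thm:phi-doubling-k}, substitute the resulting bound into Theorems~\ref{thm:spill} and~\ref{thm:rp}, and control the geometric sums $\sum_i \phi_i$ and $\sum_i \phi_i \ln(2e/(k\phi_i))$. The summation bookkeeping you carry out by hand (including the $1-\beta^{1/d_o} \gtrsim \ln(1/\beta)/d_o$ estimate, the monotonicity of $x\ln(2e/x)$ under the stated lower bound on $n_o$, and the splitting of the logarithm) is precisely the content of Lemma~\ref{lemma:bounding-summation} in the appendix, which the paper's one-line proof invokes.
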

\begin{proof}
These bounds follow immediately from Theorems~\ref{thm:spill}, \ref{thm:rp}, 
and \ref{thm:phi-doubling-k}, using Lemma~\ref{lemma:bounding-summation} from 
the appendix to bound the summation.
\end{proof}

In order to make the failure probability an arbitrarily small constant, it is sufficient
to take $n_o = O(d_o k)^{d_o} \max(k, \ln 1/\delta)$ for spill trees and 
$n_o = O(d_o k \ln (d_o k))^{d_o} \max(k, \ln 1/\delta)$ for RP trees.

\subsection{A document model}
\label{sec:topics}

In a bag-of-words model, a document is represented as a binary vector in $\{0,1\}^N$, where $N$ is the size of the vocabulary and the $i$th coordinate is 1 if the document happens to contain the corresponding word. This is a sparse representation in which the number of nonzero positions is typically much smaller than $N$.

Pick any query document $q \in \{0,1\}^N$, and suppose that $x_1, \ldots, x_n$ are generated i.i.d.\ from a topic model $\mu$. We will consider a simple such model with $t$ topics, each of which follows a product distribution. The distribution $\mu$ is parametrized by the mixing weights over topics, $w_1, \ldots, w_t$, which sum to one, and the word probabilities $(p_1^{(j)}, \ldots, p_N^{(j)})$ for each topic $1 \leq j \leq t$. Here is the generative process for a document $x$:
\begin{itemize}
\item Pick a topic $1 \leq j \leq t$, where the probability of picking $j$ is $w_j$.
\item Set the coordinates of $x \in \{0,1\}^N$ independently; the $i$th coordinate is 1 with probability $p_i^{(j)}$.
\end{itemize}
The overall distribution is thus a mixture $\mu = w_1 \mu_1 + \cdots + w_t \mu_t$ whose $j$th component is a Bernoulli product distribution $\mu_j = B(p_1^{(j)}) \times \cdots \times B(p_N^{(j)})$. Here $B(p)$ is a shorthand for the distribution on $\{0,1\}$ with expected value $p$. It will simplify things to assume that $0 < p_i^{(j)} < 1/2$; this is not a huge assumption if, say, stopwords have been removed.

For the purposes of bounding $\Phi$, we are interested in the distribution of $d_H(q, X)$, where $X$ is chosen from $\mu$ and $d_H$ denotes Hamming distance. This is a sum of small independent quantities, and it is customary to approximate such sums by a Poisson distribution. In the current context, however, this approximation is rather poor, and we instead use counting arguments to directly bound how rapidly the distribution grows. The results stand in stark contrast to those we obtained for doubling measures, and reveal this to be a substantially more difficult setting for nearest neighbor search. For a doubling measure, the probability mass of a ball $B(q, r)$ doubles whenever $r$ is multiplied by a constant. In our present setting, it doubles whenever $r$ is increased by an additive constant. Specifically, it turns out (Lemma~\ref{lem:fast-growth}) that for $\ell \leq L/8$,
$$ \frac{\pr(d_H(q, X) = \ell+1)}{\pr(d_H(q,X) = \ell)} \ \geq \ 4 .$$
Here $L = \min(L_1, \ldots, L_t)$, where $L_j$ is the expected number of words in a document drawn from $\mu_j$, that is, $L_j = p_1^{(j)} + \cdots + p_N^{(j)}$.

We start with the case of a single topic.

\subsubsection{Growth rate for one topic}

Let $q \in \{0,1\}^N$ be any fixed document and let $X$ be drawn from a Bernoulli product distribution $B(p_1) \times \cdots \times B(p_N)$. Then the Hamming distance $d_H(q,X)$ is distributed as a sum of Bernoullis,
$$ d_H(q, X) \sim B(a_1) + \cdots + B(a_N) ,$$
where
$$ 
a_i 
\ = \ 
\left\{
\begin{array}{ll}
p_i   & \mbox{if $q_i = 0$} \\
1-p_i & \mbox{if $q_i = 1$}
\end{array}
\right.
$$

To understand this distribution, we start with a general result about sums of Bernoulli random variables. Notice that the result is exactly correct in the situation where all $p_i = 1/2$.
\begin{lemma}
Suppose $Z_1, \ldots, Z_N$ are independent, where $Z_i \in \{0,1\}$ is a Bernoulli random variable with mean $0 < a_i < 1$, and $a_1 \geq a_2 \geq \cdots \geq a_N$. Let $Z = Z_1 + \cdots + Z_N$. Then for any $\ell \geq 0$,
$$ \frac{\pr(Z = \ell+1)}{\pr(Z = \ell)} \ \ \geq \ \ \frac{1}{\ell+1} \sum_{i=\ell+1}^N \frac{a_i}{1-a_i} .$$
\label{lem:bernoulli-sum}
\end{lemma}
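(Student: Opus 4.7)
The plan is to rephrase $\pr(Z=\ell)$ as a normalized elementary symmetric polynomial in the auxiliary variables $b_i = a_i/(1-a_i)$, and then reduce the claim to an inequality provable by a combinatorial rearrangement exploiting the ordering of the $b_i$'s.

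Concretely, I would first expand
$$\pr(Z=\ell) = \sum_{|S|=\ell} \prod_{i\in S} a_i \prod_{i\notin S}(1-a_i) = c \cdot e_\ell(b_1,\ldots,b_N),$$
where $c = \prod_i (1-a_i)$ and $e_\ell$ is the degree-$\ell$ elementary symmetric polynomial. Since $x\mapsto x/(1-x)$ is increasing on $(0,1)$, the hypothesis $a_1 \geq \cdots \geq a_N$ transfers to $b_1 \geq \cdots \geq b_N > 0$. The constant $c$ cancels in the ratio $\pr(Z=\ell+1)/\pr(Z=\ell)$, so the lemma reduces to showing
$$(\ell+1)\, e_{\ell+1}(b) \;\geq\; e_\ell(b) \cdot \sum_{j=\ell+1}^{N} b_j.$$

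Writing $e_k^{(j)}$ for the degree-$k$ elementary polynomial in the $b_i$'s with $b_j$ removed, I would then invoke the two standard identities $(\ell+1)\, e_{\ell+1}(b) = \sum_j b_j\, e_\ell^{(j)}(b)$ and $e_\ell(b) = e_\ell^{(j)}(b) + b_j\, e_{\ell-1}^{(j)}(b)$. Substituting the second into the first and using $\sum_{j=1}^N b_j - \sum_{j=\ell+1}^N b_j = \sum_{j=1}^\ell b_j$, the target rearranges to
$$e_\ell(b) \cdot \sum_{j=1}^{\ell} b_j \;\geq\; \sum_{j=1}^N b_j^2\, e_{\ell-1}^{(j)}(b).$$
A double counting over pairs $(U,j)$ with $|U|=\ell$ and $j \in U$ then rewrites the right-hand side as $\sum_{|U|=\ell} \prod_{i\in U} b_i \cdot \sum_{j\in U} b_j$, so the whole inequality becomes
$$\sum_{|U|=\ell} \prod_{i\in U} b_i \cdot \left(\sum_{j=1}^{\ell} b_j - \sum_{j\in U} b_j\right) \;\geq\; 0.$$

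This last inequality now holds term by term: because the $b_j$'s are non-increasing, the $\ell$-subset that maximizes $\sum_{j\in U} b_j$ is $\{1,\ldots,\ell\}$, so every bracket is non-negative while each $\prod_{i\in U} b_i$ is strictly positive. The main obstacle is the symmetric-function bookkeeping in the middle step: the underlying heuristic that the ``heavy'' Bernoullis are clustered at small indices is clear from the outset, but converting it into the precise combinatorial identity needed takes careful tracking of the two substitutions. Edge cases $\ell=0$ and $\ell+1 > N$ reduce to trivial equalities under the conventions $e_0=1$ and $e_{-1}^{(j)}=0$.
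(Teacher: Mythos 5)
Your proposal is correct and follows essentially the same route as the paper: both reduce the claim to the inequality $(\ell+1)e_{\ell+1}(b) \geq e_\ell(b)\sum_{j>\ell} b_j$ for the odds ratios $b_i = a_i/(1-a_i)$ via the same expansion of $\pr(Z=\ell)$ into elementary symmetric polynomials, and both close the argument by observing that $\sum_{|U|=\ell}\prod_{i\in U} b_i \sum_{j\in U} b_j \leq e_\ell(b)\sum_{j=1}^{\ell} b_j$ because the ordering makes $\{1,\ldots,\ell\}$ the maximizing $\ell$-subset. The only cosmetic difference is that you derive the key identity $e_\ell(b)\sum_j b_j = (\ell+1)e_{\ell+1}(b) + \sum_{|U|=\ell}\prod_{i\in U} b_i\sum_{j\in U} b_j$ from the two standard symmetric-function recurrences, whereas the paper obtains it by direct term counting.
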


\begin{proof}
Define $r_i = a_i/(1-a_i) \in (0, \infty)$; then $r_1 \geq r_2 \geq \cdots \geq r_N$. Now, for any $\ell \geq 0$,
\begin{eqnarray*}
\pr(Z = \ell) 
& = & 
\sum_{\mbox{\scriptsize $\{i_1, \ldots, i_\ell\} \subset [N]$}} a_{i_1} a_{i_2} \cdots a_{i_\ell} \prod_{j \not\in \{i_1, \ldots, i_\ell\}} (1-a_j) \\
& = & 
\prod_{i=1}^N (1-a_i) \sum_{\mbox{\scriptsize $\{i_1, \ldots, i_\ell\}\subset [N]$}} \frac{a_{i_1}}{1-a_{i_1}} \frac{a_{i_2}}{1-a_{i_2}} \cdots \frac{a_{i_\ell}}{1-a_{i_\ell}} \\
& = & 
\prod_{i=1}^N (1-a_i) \sum_{\mbox{\scriptsize $\{i_1, \ldots, i_\ell\}\subset [N]$}} r_{i_1} r_{i_2} \cdots r_{i_\ell}
\end{eqnarray*}
where the summations are over subsets $\{i_1, \ldots, i_\ell\}$ of $\ell$ distinct elements of $[N]$. In the final line, the product of the $(1-a_i)$ does not depend upon $\ell$ and can be ignored. Let's focus on the summation; call it $S_\ell$. We would like to compare it to $S_{\ell + 1}$.

$S_{\ell + 1}$ is the sum of ${N \choose \ell+1}$ distinct terms, each the product of $\ell+1$ $r_i$'s. These terms also appear in the quantity $S_\ell (r_1 + \cdots + r_N)$; in fact, each term of $S_{\ell+1}$ appears multiple times, $\ell+1$ times to be precise. The remaining terms in $S_\ell(r_1 + \cdots + r_N)$ each contain $\ell-1$ unique elements and one duplicated element. By accounting in this way, we get
\begin{eqnarray*}
S_\ell (r_1 + \cdots + r_N) 
& = & 
(\ell+1) S_{\ell+1} + \sum_{\mbox{\scriptsize $\{i_1, \ldots, i_\ell\} \subset [N]$}} r_{i_1} r_{i_2} \cdots r_{i_\ell} (r_{i_1} + \cdots + r_{i_\ell}) \\
& \leq & 
(\ell + 1) S_{\ell+1} + S_\ell (r_1 + \cdots + r_\ell)
\end{eqnarray*}
since the $r_i$'s are arranged in decreasing order. Hence
$$ 
\frac{\pr(Z = \ell+1)}{\pr(Z = \ell)}
\ = \ 
\frac{S_{\ell+1}}{S_\ell}
\ \geq \ 
\frac{1}{\ell+1} (r_{\ell+1} + \cdots + r_N),
$$
as claimed.
\end{proof}

We now apply this result directly to the sum of Bernoulli variables $Z = d_H(q,X)$.
\begin{lemma}
Suppose that $p_1, \ldots, p_N \in (0, 1/2)$. Pick any query $q \in \{0,1\}^N$, and draw $X$ from distribution $\mu = B(p_1) \times \cdots \times B(p_N)$. Then for any $\ell \geq 0$,
$$ \frac{\pr(d_H(q, X) = \ell+1)}{\pr(d_H(q,X) = \ell)} \ \ \geq \ \ \frac{L - \ell/2}{\ell+1},$$
where $L = \sum_i p_i$ is the expected number of words in $X$.
\label{lem:fast-growth}
\end{lemma}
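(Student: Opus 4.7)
The plan is to apply Lemma~\ref{lem:bernoulli-sum} directly to $Z = d_H(q, X)$. I would first observe that $Z = \sum_{i=1}^N Z_i$, where $Z_i = \mathbf{1}[X_i \neq q_i]$ is Bernoulli with mean $a_i = p_i$ if $q_i = 0$ and $a_i = 1 - p_i$ if $q_i = 1$. Lemma~\ref{lem:bernoulli-sum} (after sorting the $a_i$'s in decreasing order) then gives
$$\frac{\pr(Z = \ell+1)}{\pr(Z = \ell)} \ \geq \ \frac{1}{\ell+1} \sum_{i=\ell+1}^N \frac{a_i}{1-a_i},$$
so the whole task reduces to showing that the tail sum is at least $L - \ell/2$.

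The key structural observation I would exploit is that since every $p_i < 1/2$, we have $1 - p_i > 1/2 > p_i$. Consequently, in the sorted order of the $a_i$'s, the $k := |\{i : q_i = 1\}|$ values of the form $1 - p_i$ all occupy the top $k$ positions, and the $N - k$ values of the form $p_i$ fall below them. This naturally motivates a case split on whether $\ell \geq k$ or $\ell < k$.

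For $\ell \geq k$, the surviving $N - \ell$ terms of the tail sum are all of the form $p_j/(1-p_j)$ with $q_j = 0$; using $\frac{a}{1-a} \geq a$ and peeling off the top $\ell - k$ of these (each bounded by $1/2$), the remainder is at least $\sum_{j: q_j = 0} p_j - (\ell - k)/2$. Combining with the elementary bound $\sum_{j: q_j = 1} p_j \leq k/2$ (again because $p_j < 1/2$), this simplifies to $L - k/2 - (\ell - k)/2 = L - \ell/2$. For $\ell < k$, the tail sum retains $k - \ell$ terms of the form $(1-p_j)/p_j \geq 1$ together with all $N - k$ small terms; using $\frac{p}{1-p} \geq p$ on the latter and the same bound $\sum_{j: q_j = 1} p_j \leq k/2$, the tail is at least $(k - \ell) + L - k/2 = L + k/2 - \ell \geq L - \ell/2$. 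Either way we obtain the desired bound and divide by $\ell + 1$.

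The only thing to watch is the bookkeeping in the two cases: the hypothesis $p_i < 1/2$ is used both to locate the large-$a$ terms at the top of the sort and to cap individual $p_i$ by $1/2$, and these bounds have to line up precisely to produce the coefficient $1/2$ on $\ell$. No substantive probabilistic work beyond Lemma~\ref{lem:bernoulli-sum} is needed, so I would expect the argument to be short once the case split is set up correctly.
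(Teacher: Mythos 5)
Your proof is correct and takes essentially the same route as the paper: reduce to Lemma~\ref{lem:bernoulli-sum} for the Bernoulli sum $d_H(q,X)$ and then lower-bound the tail sum of odds ratios by $L-\ell/2$ using $p_i<1/2$. The only difference is cosmetic: where you split into the cases $\ell\geq k$ and $\ell<k$, the paper notes that every odds ratio $r_i$ exceeds the corresponding $p_i$, so the sum of the $N-\ell$ smallest $r_i$'s is at least the sum of the $N-\ell$ smallest $p_i$'s, which is at least $L-\ell/2$ in one step.
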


\begin{proof}
Suppose $q$ contains $k_o$ nonzero entries. Without loss of generality, these are $q_1, \ldots, q_{k_o}$.

As we have seen, $d_H(q,X)$ is distributed as the Bernoulli sum $B(1-p_1) + \cdots + B(1-p_{k_o}) + 
B(p_{k_o+1}) + \cdots + B(p_N)$. Define
$$ r_i \ = \ 
\left\{
\begin{array}{ll}
(1-p_i)/p_i & \mbox{if $i \leq k_o$} \\
p_i/(1-p_i) & \mbox{if $i > k_o$}
\end{array}
\right.
$$
Notice that $r_i > 1$ for $i \leq k_o$, and $\leq 1$ for $i > k_o$; and that $r_i > p_i$ always.

By Lemma~\ref{lem:bernoulli-sum}, we have that for any $\ell \geq 0$,
$$
\frac{\pr(d_H(q, X) = \ell+1)}{\pr(d_H(q,X) = \ell)} \ \geq \ \frac{1}{\ell+1} \sum_{i > \ell} r_{(i)},
$$
where $r_{(1)} \geq \cdots \geq r_{(N)}$ denotes the reordering of $r_1, \ldots, r_N$ into
descending order. Since each $r_i > p_i$, and each $p_i$ is at most $1/2$,
$$ \sum_{i > \ell} r_{(i)} 
\ \geq \ 
(\mbox{sum of $N-\ell$ smallest $p_i$'s}) 
\ \geq \ 
(\sum_i p_i) - \ell/2 
\ = \ 
L - \ell/2.
$$
\end{proof}

\subsubsection{Growth rate for multiple topics}

Now let's return to the original model, in which $X$ is chosen from a mixture of $t$ topics $\mu = w_1 \mu_1 + \cdots + w_t \mu_t$, with $\mu_j = B(p_1^{(j)}) \times \cdots \times B(p_N^{(j)})$. Then for any $\ell$,
$$ \pr(d_H(q, X) = \ell\ |\ X \sim \mu) 
\ = \ 
\sum_{j=1}^t w_j \pr(d_H(q,X) = \ell\ |\  X \sim \mu_j) .
$$
Combining this relation with Lemma~\ref{lem:fast-growth}, we immediately get the following.
\begin{cor}
Suppose that all $p_i^{(j)} \in (0,1/2)$. Let $L_j = \sum_i p_i^{(j)}$ denote the expected number of words in a document from topic $j$, and let $L = \min(L_1, \ldots, L_t)$. Pick any query $q \in \{0,1\}^N$, and draw $X \sim \mu$. For any $\ell \geq 0$,
$$ \frac{\pr(d_H(q, X) = \ell+1)}{\pr(d_H(q,X) = \ell)} \ \ \geq \ \ \frac{L - \ell/2}{\ell+1}.$$
\label{cor:fast-growth}
\end{cor}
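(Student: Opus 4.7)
The plan is to reduce the mixture claim to the single-topic bound of Lemma~\ref{lem:fast-growth} component by component, and then combine via the linearity of the mixture decomposition already written down just above the corollary statement.

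Concretely, I would first apply Lemma~\ref{lem:fast-growth} to each topic $\mu_j$ separately. Since $\mu_j = B(p_1^{(j)}) \times \cdots \times B(p_N^{(j)})$ with all $p_i^{(j)} \in (0,1/2)$ and expected length $L_j = \sum_i p_i^{(j)}$, the lemma gives
$$ \pr(d_H(q,X) = \ell+1 \mid X \sim \mu_j) \ \geq \ \frac{L_j - \ell/2}{\ell+1}\, \pr(d_H(q,X) = \ell \mid X \sim \mu_j). $$
Next I would replace $L_j$ by $L = \min_j L_j$ on the right-hand side. This step requires the coefficient $(L_j - \ell/2)/(\ell+1)$ to be monotone in $L_j$, which it is (it is linear and increasing in $L_j$ regardless of the sign of $L - \ell/2$), so the inequality is preserved; if $L - \ell/2 \leq 0$ the claim is trivial since probabilities are nonnegative.

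Finally I would multiply both sides of the per-topic inequality by the mixing weight $w_j \geq 0$ and sum over $j = 1, \ldots, t$. Using the identity
$$ \pr(d_H(q,X) = \ell) \ = \ \sum_{j=1}^t w_j \, \pr(d_H(q,X) = \ell \mid X \sim \mu_j) $$
stated immediately before the corollary (and the same identity at $\ell+1$), this yields
$$ \pr(d_H(q,X) = \ell+1) \ \geq \ \frac{L - \ell/2}{\ell+1} \, \pr(d_H(q,X) = \ell), $$
which is the claim. There is no real obstacle here; the only thing to be careful about is the monotonicity-in-$L_j$ step and handling the degenerate regime $\ell \geq 2L$ (where the bound is vacuous). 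The proof is essentially a one-line consequence of Lemma~\ref{lem:fast-growth} plus linearity of the mixture, which is presumably why the authors labeled it a corollary.
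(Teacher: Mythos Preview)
Your proposal is correct and matches the paper's approach: the authors state only that the corollary follows immediately by combining the mixture decomposition with Lemma~\ref{lem:fast-growth}, which is precisely the per-topic application, weakening $L_j$ to $L=\min_j L_j$, and summing with weights $w_j$ that you spell out. Your remarks on monotonicity in $L_j$ and the vacuous regime $\ell \geq 2L$ are the right sanity checks for the one nontrivial step.
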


\subsubsection{Bounding $\Phi$}

Fix a particular query $q \in \{0,1\}^N$, and draw $x_1, \ldots, x_n$ from distribution $\mu$. Let the random variable $S_\ell$ denote the points at Hamming distance exactly $\ell$ from $q$, so that $\E |S_\ell| = n \pr_{X \sim \mu}(d_H(q,X) = \ell)$. 

\begin{lemma}
There is an absolute constant $c_o$ for which the following holds.
Pick any $0 < \delta < 1$ and any $k \geq 1$, and let $v$ denote the smallest integer for which
$\pr_{X \sim \mu}(d_H(q,X) \leq v) \geq (8/n) \max(k, \ln 1/\delta)$. 
Then with probability at least $1-3\delta$,
\begin{enumerate}
\item[(a)] $|S_0| + \cdots + |S_v| \geq k$.
\item[(b)] If $v \leq c_o L$ then $|S_0| + \cdots + |S_{v-1}| \leq |S_v|$.
\item[(c)] For all $v \leq \ell \leq c_oL$, we have $|S_{\ell + 1}|/|S_\ell| \geq 2$.
\end{enumerate}
If (a, b, c) hold, then for any $m \leq n$,
$$ \Phi_{k,m}(q, \{x_1, \ldots, x_n\})
\ \leq \ 
4 \sqrt{\frac{v}{c_o L - \log_2 (n/m)}}.$$
\label{lemma:empirical-growth}
\end{lemma}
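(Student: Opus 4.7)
The plan is to first establish the three empirical properties (a)--(c) via multiplicative Chernoff bounds on the level counts $\E|S_\ell|=n\,\pr_{X\sim\mu}(d_H(q,X)=\ell)$, and then feed these into a tail-sum bound for $\Phi_{k,m}$. For (a), the definition of $v$ makes $\E(|S_0|+\cdots+|S_v|)\geq 8\max(k,\ln(1/\delta))$, so Chernoff gives the empirical sum at least $k$ with probability at least $1-\delta$. For (b), the minimality of $v$ forces $\pr(d_H\leq v-1)<(8/n)\max(k,\ln(1/\delta))$; and Corollary~\ref{cor:fast-growth}, applied with $c_o$ small enough that the growth ratio is at least $4$ for all $\ell\leq c_oL$, makes $\pr(d_H=v)\geq 3\,\pr(d_H\leq v-1)$ via a geometric-series bound on the lower levels. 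Hence $\E|S_v|\geq 3\E(|S_0|+\cdots+|S_{v-1}|)$, and Chernoff delivers (b). For (c), the expected ratio $\E|S_{\ell+1}|/\E|S_\ell|$ is at least $4$ for each $\ell\in[v,c_oL]$, and because the expectations start at a constant multiple of $\ln(1/\delta)$ and grow geometrically, Chernoff plus a union bound over the $O(L)$ levels yields the empirical ratio at least $2$ simultaneously, with total failure probability at most $\delta$.

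For the $\Phi$ bound I would let $\ell^*=d_H(q,x_{(m)})$. By (a) the $k$ nearest neighbours all lie within Hamming distance $v$, so the average distance in the numerator of $\Phi_{k,m}$ is at most $\sqrt{v}$, and it suffices to bound $T=\sum_{i=k+1}^m 1/\sqrt{d_H(q,x_{(i)})}$. I would split $T$ by Hamming distance into three pieces. The low range $\ell<v$ contributes at most $|S_v|/\sqrt{1}\leq |S_v|$ by (b), a term that ends up negligible after dividing by $m$. In the middle doubling range $v\leq\ell\leq\min(\ell^*,c_oL)$, iterating (c) gives $|S_\ell|\leq |S_{\ell^*-1}|/2^{\ell^*-1-\ell}\leq m/2^{\ell^*-1-\ell}$ (using $|S_{\ell^*-1}|\leq N_{\ell^*-1}<m$); bounding $1/\sqrt{\ell}\leq\sqrt{2/\ell^*}$ on the top half of levels and the exponential $2^{-j}$ factor on the bottom half collapses the geometric sum to $O(m/\sqrt{\ell^*})$. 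In the possible tail range $\ell>c_oL$ (when $\ell^*>c_oL$), $1/\sqrt{\ell}\leq 1/\sqrt{c_oL}$ while the total count is at most $m$, contributing $O(m/\sqrt{c_oL})$. All told, $T\leq C\,m/\sqrt{\min(\ell^*,c_oL)}$ for an absolute constant $C$.

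Finally I would relate $\min(\ell^*,c_oL)$ to $s=c_oL-\log_2(n/m)$. When $\ell^*>c_oL$, the bound $\sqrt{v/(c_oL)}\leq\sqrt{v/s}$ is immediate since $\log_2(n/m)\geq 0$ for $m\leq n$. When $\ell^*\leq c_oL$, iterating (c) up from $\ell^*$ to $c_oL$ yields $|S_{c_oL}|\geq 2^{c_oL-\ell^*}|S_{\ell^*}|\geq 2^{c_oL-\ell^*}(m/2)$, where $|S_{\ell^*}|\geq m/2$ because $m\leq N_{\ell^*}\leq 2|S_{\ell^*}|$ by (c). The trivial bound $|S_{c_oL}|\leq n$ then forces $\ell^*\geq c_oL-1-\log_2(n/m)=s-1$. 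Putting everything together gives $\Phi_{k,m}\leq(\sqrt{v}/m)T=O(\sqrt{v/s})$. The main technical nuisance I anticipate is bookkeeping --- landing on the exact constant $4$ in the stated bound requires careful tracking of the factors lost in the geometric sums (notably the $\sqrt{2}$ from bounding $\sqrt{\ell^*-j}\geq\sqrt{\ell^*/2}$ on the top half of levels, and the unit of slack between $\ell^*$ and $s$) and handling the degenerate small-$v$ cases where $\ell^*$ is only slightly larger than $v$ so that the $\ell<v$ tail is not automatically dominated.
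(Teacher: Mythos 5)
Your proposal is correct and follows essentially the same route as the paper's (much terser) proof: parts (a)--(c) via Chernoff bounds on the level counts using Corollary~\ref{cor:fast-growth}, the numerator of $\Phi_{k,m}$ bounded by $\sqrt{v}$ via (a) and the Euclidean-equals-root-Hamming identity, the geometric growth of (c) used both to force $d_H(q,x_{(m)})\gtrsim c_oL-\log_2(n/m)$ and to collapse the sum over levels as an increasing geometric series. Your explicit three-range split and the $\ell^*\geq s-1$ argument are just fuller versions of steps the paper compresses into single sentences.
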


\begin{proof}
Parts (a, b, c) are shown by applying multiplicative Chernoff bounds to the result of Corollary~\ref{cor:fast-growth}. The details are very similar to those of Theorem~\ref{thm:phi-doubling-k}, and hence we omit them and turn to bounding $\Phi$.

Suppose that for some $i > k$, point $x_{(i)}$ is at Hamming distance $\ell$ from $q$, that is, $x_{(i)} \in S_{\ell}$. Then
$$ 
\frac{(\|q-x_{(1)}\| + \cdots + \|q - x_{(k)}\|)/k}{\|q - x_{(i)}\|} 
\ \leq \ 
\sqrt{\frac{v}{\ell}}
$$
since Euclidean distance is the square root of Hamming distance. In bounding $\Phi_{k,m}$, we need to gauge the range of Hamming distances spanned by $x_{(k+1)}, \ldots, x_{(m)}$.

The geometric growth rate of part (c) implies that most points lie at Hamming distance $c_o L$ or greater from $q$. It also means that $d_H(q, x_{(m)}) > c_o L - \log_2 (n/m)$. Thus,
\begin{eqnarray*}
\Phi_{k,m}(q, \{x_1, \ldots, x_n\}) 
& = &
\frac{1}{m} \sum_{i>k} \frac{(\|q-x_{(1)}\| + \cdots + \|q - x_{(k)}\|)/k}{\|q - x_{(i)}\|} \\
& \leq &
\frac{1}{m} \sum_{\ell \geq v} | S_\ell \cap \{x_{(1)}, \ldots, x_{(m)}\}| \sqrt{\frac{v}{\ell}}
\ \ \leq \ \  
4 \sqrt{\frac{v}{c_oL - \log_2 (n/m)}}
\end{eqnarray*}
where the last step follows by lower-bounding $|S_\ell|$ by an increasing geometric series.
\end{proof}

The implication of this lemma is that for any of the three tree data structures, the failure probability at a single level is roughly $\sqrt{v/L}$. This means that the tree can only be grown to depth $O(\sqrt{L/v})$, and thus the query time is dominated by $n_o = n \cdot 2^{-O(\sqrt{L/v})}$.

When $n$ is large, we expect $v$ to be small, and thus the query time improves over exhaustive search by a factor of roughly $2^{-\sqrt{L}}$.

\bibliographystyle{apalike}
\bibliography{/Users/dasgupta314/linux/PAPERS/sanjoy}

\begin{thebibliography}{}

\bibitem[Andoni and Indyk, 2008]{AI08}
Andoni, A. and Indyk, P. (2008).
\newblock Near-optimal hashing algorithms for approximate nearest neighbor in
  high dimensions.
\newblock {\em Communications of the ACM}, 51(1):117--122.

\bibitem[Arya et~al., 1998]{AMNSW98}
Arya, S., Mount, D., Netanyahu, N., Silverman, R., and Wu, A. (1998).
\newblock An optimal algorithm for approximate nearest neighbor searching.
\newblock {\em Journal of the ACM}, 45:891--923.

\bibitem[Bentley, 1975]{B75}
Bentley, J. (1975).
\newblock Multidimensional binary search trees used for associative searching.
\newblock {\em Communications of the ACM}, 18(9):509--517.

\bibitem[Beygelzimer et~al., 2006]{BKL06}
Beygelzimer, A., Kakade, S., and Langford, J. (2006).
\newblock Cover trees for nearest neighbor.
\newblock In {\em 23rd International Conference on Machine Learning}.

\bibitem[Dasgupta and Freund, 2008]{DF08}
Dasgupta, S. and Freund, Y. (2008).
\newblock Random projection trees and low dimensional manifolds.
\newblock In {\em ACM Symposium on Theory of Computing}, pages 537--546.

\bibitem[Karger and Ruhl, 2002]{KR02}
Karger, D. and Ruhl, M. (2002).
\newblock Finding nearest neighbors in growth-restricted metrics.
\newblock In {\em ACM Symposium on Theory of Computing}, pages 741--750.

\bibitem[Kleinberg, 1997]{K97}
Kleinberg, J. (1997).
\newblock Two algorithms for nearest-neighbor search in high dimensions.
\newblock In {\em 29th ACM Symposium on Theory of Computing}.

\bibitem[Krauthgamer and Lee, 2004]{KL04}
Krauthgamer, R. and Lee, J. (2004).
\newblock Navigating nets: simple algorithms for proximity search.
\newblock In {\em ACM-SIAM Symposium on Discrete Algorithms}.

\bibitem[Liu et~al., 2004]{LMGY04}
Liu, T., Moore, A., Gray, A., and Yang, K. (2004).
\newblock An investigation of practical approximate nearest neighbor
  algorithms.
\newblock In {\em Neural Information Processing Systems}.

\bibitem[Maneewongvatana and Mount, 2001]{MM01}
Maneewongvatana, S. and Mount, D. (2001).
\newblock The analysis of a probabilistic approach to nearest neighbor
  searching.
\newblock In {\em Seventh International Worshop on Algorithms and Data
  Structures}, pages 276--286.

\end{thebibliography}

\appendix

\section{Technical lemma}

\begin{lemma}
Suppose that for some constants $A, B > 0$ and $d_o \geq 1$,
$$ F(m) \ \leq \ A \left( \frac{B}{m} \right)^{1/d_o} $$
for all $m \geq n_o$. Pick any $0 < \beta < 1$ and define $\ell = \log_{1/\beta} (n/n_o)$. Then:
$$
\sum_{i = 0}^\ell F(\beta^i n)
\ \  \leq \ \  \frac{A d_o}{1-\beta} \left( \frac{B}{n_o} \right)^{1/d_o} 
$$
and, if $n_o \geq B(A/2)^{d_o}$,
$$
\sum_{i=0}^\ell F(\beta^i n) \ln \frac{2e}{F(\beta^i n)}
\ \ \leq \ \ 
\frac{A d_o}{1-\beta} \left( \frac{B}{n_o} \right)^{1/d_o} 
\left( \frac{1}{1-\beta} \ln \frac{1}{\beta} + \ln \frac{2e}{A} + \frac{1}{d_o} \ln \frac{n_o}{B} \right) 
.
$$
\label{lemma:bounding-summation}
\end{lemma}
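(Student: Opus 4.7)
The plan is to reduce both bounds to elementary geometric-series computations. For the first sum, substitute the hypothesis $F(\beta^i n) \leq A(B/(\beta^i n))^{1/d_o}$ to get
$$\sum_{i=0}^\ell F(\beta^i n) \ \leq \ A\Bigl(\frac{B}{n}\Bigr)^{1/d_o} \sum_{i=0}^\ell \beta^{-i/d_o}.$$
The geometric sum is at most $\beta^{-\ell/d_o}/(1-\beta^{1/d_o})$, and since $\beta^{-\ell} = n/n_o$, the numerator becomes $(n/n_o)^{1/d_o}$, which cancels the $n$ and leaves $(B/n_o)^{1/d_o}$. The only nontrivial step is the denominator: I will use the elementary bound $1-\beta^{1/d_o}\geq (1-\beta)/d_o$, valid for $d_o\geq 1$, which follows immediately from the mean-value theorem applied to $x\mapsto x^{1/d_o}$ on $[\beta,1]$. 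This yields exactly the claimed bound.

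For the second sum, the key observation is that $\phi(x) = x\ln(2e/x)$ has derivative $\ln(2/x)$, so $\phi$ is monotonically increasing on $(0,2]$. Let $G_i = A(B/(\beta^i n))^{1/d_o}$, so $F(\beta^i n) \leq G_i$. The hypothesis $n_o \geq B(A/2)^{d_o}$ is exactly what ensures $G_\ell = A(B/n_o)^{1/d_o} \leq 2$, and since the $G_i$ are increasing in $i$, every $G_i$ (and hence every $F(\beta^i n)$) lies in $(0,2]$. By monotonicity of $\phi$, I can replace each $F(\beta^i n)$ by $G_i$ term-by-term.

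The remaining task is to evaluate $\sum_{i=0}^\ell G_i \ln(2e/G_i)$ in closed form. Setting $\gamma = \beta^{-1/d_o}$ so that $G_i = G_0\gamma^i$, write
$$\ln(2e/G_i) \ = \ \ln(2e/G_\ell) + (\ell-i)\tfrac{1}{d_o}\ln(1/\beta),$$
which splits the sum into $\ln(2e/G_\ell)\sum_i G_i$ and $\tfrac{1}{d_o}\ln(1/\beta)\sum_i G_i(\ell-i)$. The first piece is handled by the geometric bound already used above, giving $\sum_i G_i \leq G_\ell\cdot d_o/(1-\beta)$. For the second piece, reindex by $k=\ell-i$ to obtain $G_\ell \sum_{k\geq 0} k\gamma^{-k} \leq G_\ell/(1-\gamma^{-1})^2 \leq G_\ell(d_o/(1-\beta))^2$, again using $1-\gamma^{-1} = 1-\beta^{1/d_o}\geq (1-\beta)/d_o$. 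Combining and substituting $G_\ell = A(B/n_o)^{1/d_o}$ and $\ln(2e/G_\ell) = \ln(2e/A) + (1/d_o)\ln(n_o/B)$ gives the stated bound on the nose.

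The main obstacle is purely bookkeeping: correctly splitting $\ln(2e/G_i)$ as an affine function of $i$, keeping track of the two resulting series ($\sum\gamma^i$ and $\sum i\gamma^{-k}$), and noticing that the assumption $n_o\geq B(A/2)^{d_o}$ is precisely what lets the monotonicity reduction from $F$ to $G$ go through. Once the monotonicity observation is in place, everything else is routine.
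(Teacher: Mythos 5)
Your proof is correct and follows essentially the same route as the paper's: bound each term by $A(B/(\beta^i n))^{1/d_o}$ and use $1-\beta^{1/d_o}\geq(1-\beta)/d_o$ for the first sum; for the second, use monotonicity of $x\ln(2e/x)$ on $(0,2]$ (guaranteed by the hypothesis on $n_o$), split the logarithm into a constant piece plus a piece linear in $\ell-i$, and control the two resulting series via $\sum_k \beta^{k/d_o}\leq d_o/(1-\beta)$ and $\sum_k k\beta^{k/d_o}\leq d_o^2/(1-\beta)^2$. The only cosmetic difference is that you justify $1-\beta^{1/d_o}\geq(1-\beta)/d_o$ by the mean-value theorem rather than by the inequality $(1-x)^p\geq 1-px$.
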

\begin{proof}
Writing the first series in reverse,
\begin{eqnarray*}
\sum_{i = 0}^{\ell} F(\beta^i n)
\ = \ 
\sum_{i=0}^\ell F\left( \frac{n_o}{\beta^i} \right)  
& \leq & 
\sum_{i=0}^\ell A \left( \frac{B \beta^i}{n_o} \right)^{1/d_o} \\
& = & 
A \left( \frac{B}{n_o} \right)^{1/d_o} \sum_{i=0}^\ell \beta^{i/d_o} \\
& \leq & 
\frac{A}{1-\beta^{1/d_o}} \left( \frac{B}{n_o} \right)^{1/d_o} 
\ \leq \ 
\frac{A d_o}{1-\beta}  \left( \frac{B}{n_o} \right)^{1/d_o}.
\end{eqnarray*}
The last inequality is obtained by using
$$ (1-x)^p \geq 1-px \mbox{\ \ \ for $0 < x < 1$, $p \geq 1$} $$
to get $(1 - (1-\beta)/d_o)^{d_o} \geq \beta$ and thus $1-\beta^{1/d_o} \geq (1-\beta)/d_o$.

Now we move on to the second bound. The lower bound on $n_o$ implies that $A (B/m)^{1/d_o} \leq 2$
for all $m \geq n_o$. Since $x \ln (2e/x)$ is increasing when $x \leq 2$, we have
\begin{eqnarray*}
\sum_{i=0}^\ell F(\beta^i n) \ln \frac{2e}{F(\beta^i n)}
& \leq & 
\sum_{i=0}^\ell A \left( \frac{B}{\beta^i n} \right)^{1/d_o} \ln \frac{2e}{A (B/(\beta^i n))^{1/d_o}} .
\end{eqnarray*}
The lemma now follows from algebraic manipulations that invoke the first bound as well as the 
inequality
$$ \sum_{i = 0}^\ell i F \left( \frac{n_o}{\beta^i} \right)
\ \leq \ 
\frac{A d_o^2}{(1-\beta)^2} \left( \frac{B}{n_o} \right)^{1/d_o} ,
$$
which in turn follows from
$$
\sum_{i = 0}^\ell i \beta^{i/d_o} 
\ \leq \ 
\sum_{i=1}^\infty i \beta^{i/d_o}
\ = \ 
\sum_{i=1}^\infty \sum_{j = i}^\infty \beta^{j/d_o} 
\ = \ 
\sum_{i=1}^\infty \frac{\beta^{i/d_o}}{1-\beta^{1/d_o}}
\ = \ 
\frac{\beta^{1/d_o}}{(1-\beta^{1/d_o})^2}
\ \leq \ 
\frac{d_o^2}{(1-\beta)^2}.
$$
\end{proof}

\end{document}